\newtheorem{definition}{Definition}[section]
\newtheorem{lemma}{Lemma}%[section]
\newtheorem{proposition}{Proposition}%[section]
\newtheorem{theorem}{Theorem}%[section]
\newcommand{\bes}{\begin{displaymath}}
\newcommand{\ees}{\end{displaymath}}
\newcommand{\be}{\begin{equation}}
\newcommand{\ee}{\end{equation}}
\newcommand{\ba}{\begin{eqnarray}}
\newcommand{\ea}{\end{eqnarray}}
\newcommand{\bas}{\begin{eqnarray*}}
\newcommand{\eas}{\end{eqnarray*}}
\newcommand{\@Bbb}[1]{\ensuremath{\Bbb #1}}
\newcommand{\B}{{\@Bbb B}}
\newcommand{\C}{{\@Bbb C}}
\newcommand{\E}{{\@Bbb E}}
\newcommand{\F}{{\@Bbb F}}
\newcommand{\G}{{\@Bbb G}}
\renewcommand{\P}{{\@Bbb P}}
\newcommand{\Q}{{\@Bbb Q}}
\newcommand{\bQ}{{\@Bbb Q}}
\newcommand{\N}{{\@Bbb N}}
\newcommand{\R}{{\@Bbb R}}
\newcommand{\T}{{\@Bbb T}}
\newcommand{\bbR}{{\@Bbb R}}
\newcommand{\W}{{\@Bbb W}}
\newcommand{\Z}{{\@Bbb Z}}
\newcommand{\bbZ}{{\@Bbb Z}}
\newcommand{\bone}{{\bf 1}}
\newcommand{\om}{\omega}
\newcommand{\ep}{\varepsilon}
\newcommand{\@s}[1]{\ensuremath{\mathcal #1}}
\newcommand{\cA}{\@s A}
\newcommand{\cB}{\@s B}
\newcommand{\cC}{\@s C}
\newcommand{\cD}{\@s D}
\newcommand{\cE}{\@s E}
\newcommand{\cF}{\@s F}
\newcommand{\cG}{\@s G}
\newcommand{\cH}{\@s H}
\newcommand{\cI}{\@s I}
\newcommand{\cJ}{\@s J}
\newcommand{\cal}{\mathcal}
\newcommand{\cK}{\@s K}
\newcommand{\cL}{\@s L}
\newcommand{\cN}{\@s N}
\newcommand{\cM}{\@s M}
\newcommand{\cO}{\@s O}
\newcommand{\cP}{\@s P}
\newcommand{\cQ}{\@s Q}
\newcommand{\cR}{\@s R}
\newcommand{\cS}{\@s S}
\newcommand{\cT}{\@s T}
\newcommand{\cU}{\@s U}
\newcommand{\cV}{\@s V}
\newcommand{\cW}{\@s W}
\newcommand{\cX}{\@s X}
\newcommand{\cY}{\@s Y}
\newcommand{\cZ}{\@s Z}
\def\REQ#1{{\rm (\ref{#1})}}
\def\d{{\rm d}}
\newcommand{\@bm}[1]{\ensuremath{\mathbf #1}}
\newcommand{\bma}{\@bm a}\newcommand{\bmA}{\@bm A}
\newcommand{\bmb}{\@bm b}\newcommand{\bmB}{\@bm B}
\newcommand{\bmc}{\@bm c}\newcommand{\bmC}{\@bm C}
\newcommand{\bmd}{\@bm d}\newcommand{\bmD}{\@bm D}
\newcommand{\bme}{\@bm e}
\newcommand{\bmf}{\@bm f}\newcommand{\bmF}{\@bm F}
\newcommand{\bmg}{\@bm g}\newcommand{\bmG}{\@bm G}
\newcommand{\bmh}{\@bm h}\newcommand{\bmH}{\@bm H}
\newcommand{\bmi}{\@bm i}\newcommand{\bmI}{\@bm I}
\newcommand{\bmj}{\@bm j}
\newcommand{\bmk}{\@bm k}\newcommand{\bmK}{\@bm K}
\newcommand{\bml}{\@bm l}
\newcommand{\bmm}{\@bm m}\newcommand{\bmM}{\@bm M}
\newcommand{\bmn}{\@bm n}
\newcommand{\bmo}{\@bm o}
\newcommand{\bmp}{\@bm p}
\newcommand{\bmq}{\@bm q}\newcommand{\bmQ}{\@bm Q}
\newcommand{\bmr}{\@bm r}
\newcommand{\bms}{\@bm s}\newcommand{\bmS}{\@bm S}
\newcommand{\bmt}{\@bm t}
\newcommand{\bmu}{\@bm u}\newcommand{\bmU}{\@bm U}
\newcommand{\bmw}{\@bm w}\newcommand{\bmW}{\@bm W}
\newcommand{\bmv}{\@bm v}\newcommand{\bmV}{\@bm V}
\newcommand{\bmx}{\@bm x}\newcommand{\bmX}{\@bm X}\newcommand{\bx}{\@bm x}
\newcommand{\bmy}{\@bm y}\newcommand{\bmY}{\@bm Y}\newcommand{\by}{\@bm y}
\newcommand{\bmz}{\@bm z}\newcommand{\bmZ}{\@bm Z}
\newcommand{\bmzero}{\@bm 0}
\newcommand{\Tr}{\mathop{\rm Tr}}
\newcommand{\@g}[1]{\ensuremath{\mathfrak #1}}
\newcommand{\gA}{\@g A}
\newcommand{\gD}{\@g D}
\newcommand{\gJ}{\@g J}
\newcommand{\gF}{\@g F}
\newcommand{\gM}{\@g M}
\newcommand{\gR}{\@g R}
\newcommand{\D}{\mathop{\mbox{D}}}
\newcommand{\commentout}[1]{{}}
\begin{document}
\title[The stability of some stochastic processes ]{The stability of some stochastic processes}
\author{H. Bessaih}
\address{Department of Mathematics, University of Wyoming, Laramie, Wyoming, U.S.A.}
\email{Bessaih@uwyo.edu}
\author{R. Kapica}
\address{Department of Mathematics, University of Silesia, Bankowa 14, 40-007 Katowice, Poland}
\email{rkapica@math.us.edu.pl}
\author{T. Szarek}
\address{Institute of Mathematics, University of Gda\'nsk, Wita Stwosza
  57, 80-952 Gda\'nsk, Poland}
\email{szarek@intertele.pl}

\begin{abstract} We formulate and prove a new criterion for stability of e--processes. It says that any e--process which is averagely bounded and concentrating is asymptotically stable. In the second part, we show how this general result applies to some shell models (the Goy and the Sabra model).
Indeed, we manage to prove that the processes corresponding to these models satisfy the e--process property. They are also averagely bounded and concentrating. Consequently, their stability follows.
\end{abstract}

\subjclass[2000]{60J25, 60H15 (primary), 76N10 (secondary)}
\keywords{Ergodicity of Markov families, invariant measures,
stochastic evolution equations,  Shell models of turbulence}

\thanks{Tomasz Szarek has been supported by Polish Ministry of Science and
Higher Education  Grants N N201 419139. }

\date\today
\maketitle

\section{Introduction}

The paper is aimed at formulating and proving a new criterion for asymptotic stability of Markov processes. We are concerned with processes satisfying the so-called e--property (see \cite{ Ku-Sh, Lasota-Szarek}). The e--property is a generalization of nonexpansiveness and it allows us to overcome some restrictiveness of the strong Feller property when applied to some SPDE's (see for instance \cite{KPS10}). Recall that a process is called nonexpansive if the Markov semigroup (acting on measures) corresponding to the process is nonexpansive with respect to the Wasserstein metric. A. Lasota and J. Yorke (see \cite{LasYorke}) found a very elegant and concise condition assuring the existence and uniqueness of an invariant measure for nonexpansive Markov chains. It says that the considered chain is concentrating at some point, i.e.  the chain remains in arbitrary small neighborhoods of some fixed point with positive probability independent of an initial point. The proof based on the lower bound technique was developed in \cite{LasYorke, LasMackey}. The result mentioned above was proved in locally compact spaces, subsequently the proof was also given in Polish spaces \cite{Szar} and finally the result was extended to Markov processes \cite{Trap}. Here, we prove that if a Markov process is averagely bounded and with positive probability enters into any neighborhood of a fixed point, then this process is asymptotically stable. In particular, it admits a unique invariant measure.

We strongly believe that the criterion proved in the first part of the paper may be useful in the theory of SPDE's in particular with L\'evy noise.
Here, it is applied to some stochastic shell models of turbulence, the GOY and Sabra model. These are very popular examples of simplified phenomenological models of turbulence. Although they are not based on conservations laws, they capture some essential statistical properties and features of turbulent flows, like the energy and the enstrophy cascade and the power law decay of the structure functions in some range of wave numbers, the inertial range. We refer the reader to \cite{OY89}, \cite{ABCPV01}, \cite{Biferale03},  \cite{Frisch} and \cite{Gallavotti} and the references therein and to \cite{CLT06}, \cite{FlaCIME}  and \cite{Hakima} for some rigorous results. We are interested in a noise where only finitely many modes are nonzero and then we prove the e-process property. It is possible that the simillar results may be obtained using coupling methods (see for instance \cite{Hai, Ku-Sh, BKL}). However we provide this application for its simplicity. Indeed, the proof of the e--property makes use of the Malliavin calculus developed in \cite{HM06}. Boundedness and concentrating property, in turn, easily follow from standard estimates for shell models. Their proofs are rather straightforward. The main result of this part of our paper answers to the conjecture posed by Barbato {\it et al.} (see \cite{Hakima}) who anticipated that in the case when the number of modes to which we add the noise is large enough, it would be possible to prove the uniqueness of an invariant measure.

The paper is organized as follows. In section 2, we introduce the concepts of e-property, averagely bounded and concentrating at a point. We also prove (Proposition 1) the main result about asymptotic stability for Markov processes. In Section 3, we introduce the GOY and Sabra models and give general results about their well posedness. In Section 4, we apply the results of Section 2 to the shell models and prove the e-property, the average boundedness and the concentrating property and state our main result for the uniqueness of the invariant measure for the stochastic shell model with a degenerate noise.

\section{Criterion on Stability}
Let $(X, \rho)$ be a Polish space.
By $B_b(X)$ we denote the space of all bounded Borel--measurable functions equipped with the supremum norm.
Let $(P_t)_{t\ge 0}$ be the \emph{Markovian semigroup} defined on $B_b(X)$.
For each $t\ge 0$ we have $P_t\bone_X=\bone_X$ and $P_t\psi\ge 0$ if $\psi\ge 0$.
Throughout this paper we shall assume that the semigroup is \emph{Feller}, i.e.
$P_t(C_b(X))\subset C_b(X)$ for all $t>0$. Here and in the sequel $C_b(X)$ is the subspace of all
bounded continuous functions with the supremum norm $\|\cdot\|_{\infty}$. By $L_b(X)$ we will denote the subspace of all bounded Lipschitz functions.
We shall also assume that $(P_t)_{t\ge 0}$ is \emph{stochastically continuous}, which implies that $\lim_{t\to 0^+} P_t \psi(x)=\psi(x)$ for all $x\in X$ and $\psi\in C_b(X)$.

Let ${\cal M}_1$
stand for the space
of all Borel probability measures on $X$.
Denote by $\mathcal M_{1}^{W}$, $W\subset X$, the subspace of all Borel probability measures supported in $W$,
i.e. $\{x\in X: \mu(B(x,r))>0\text{ for any }r>0\}\subset W$, where $B(x, r)$ denotes the ball in $X$ with center at $x$ and radius $r$.
For $\varphi\in B_b(X)$ and $\mu\in\mathcal M_1$ we will use the notation
$\langle \varphi,\mu\rangle =\int_X\varphi(x)\mu(\d x)$.
Recall that the {\it total variation norm} of a finite signed measure $\mu\in\cal M_1-\cal M_1$ is given by
$\|\mu\|_{TV}=\mu^{+}(X)+\mu^{-}(X)$, where $\mu=\mu^{+}-\mu^{-}$ is the Jordan decomposition of $\mu$.

We say that $\mu_* \in \mathcal M_1$ is
\emph{invariant} for $(P_t)_{t\ge 0}$ if
$
\langle P_t\psi, \mu_*\rangle = \langle\psi, \mu_*\rangle
$
for every $\psi\in B_b(X)$ and $t\ge 0.$
Alternatively, we can say that $P_t^*\mu_*=\mu_*$ for all $t\ge 0$, where
$(P_t^*)_{t\ge 0}$ denotes the semigroup dual to $(P_t)_{t\ge 0}$, i.e. for a given  Borel measure $\mu$, Borel
subset $A$ of $X$, and $t\ge0$ we set
$$
P_t^*\mu (A):=\langle P_t \bone_A, \mu \rangle.
$$

A semigroup $(P_{t})_{t\ge 0}$ is said to be \emph{asymptotically stable} if there exists an invariant measure $\mu_* \in \mathcal M_1$ such that $P_{t}^{*}\mu$ converges weakly to $\mu_{*}$ as $t\to+\infty$
for every $\mu \in \mathcal M_1$. Obviously $\mu_{*}$ is unique.

\begin{definition}
We say that a  semigroup $(P_t)_{t\ge 0}$ has
\emph{the e--property} if the family of functions $(P_t\psi)_{t\ge 0}$  is
equicontinuous at every point $x$ of $X$ for any bounded and
Lipschitz function $\psi$, i.e.
$$
\forall\,\psi\in L_b(X),\,x\in X,\,\ep>0\,\exists\,\delta>0\,\forall\,z\in B(x,\delta),\,t\ge0:\,|P_t\psi(x)-P_t\psi(z)|<\ep.
$$
\end{definition}
\noindent {\bf Remark.} One can show (see \cite{KPS10}) that to obtain the e--property in the case when $X$ is a Hilbert space, it is enough to verify
the above condition for every function with bounded Fr\'echet derivative.

\begin{definition} A semigroup $(P_{t})_{t\ge 0}$ is called \emph{averagely bounded} if for any $\ep>0$ and bounded set $A\subset X$ there is a bounded Borel set $B\subset X$ such that
$$
\limsup_{T\to\infty}\frac{1}{T}\int_{0}^{T}P_{s}^{*}\mu(B)\d s>1-\ep\qquad\text{for $\mu\in\mathcal M_{1}^{A}$.}
$$
\end{definition}

\begin{definition} A semigroup $(P_{t})_{t\ge 0}$ is \emph{concentrating} at $z$ if for any $\ep>0$ and bounded set $A\subset X$ there exists $\alpha>0$ such that for any two measures $\mu_{1}, \mu_{2}\in\mathcal M_{1}^A$ holds
$$
P_{t}^{*}\mu_{i}(B(z, \ep))\ge\alpha\,\,\text{for $i=1, 2$ and some $t>0$}.
$$
\end{definition}

\begin{proposition}\label{prop1}
Let $(P_{t})_{t\ge 0}$ be averagely bounded and concentrating at some $z\in X$. If $(P_{t})_{t\ge 0}$ satisfies the e--property, then for any $\varphi\in L_{b}(X)$ and $\mu_{1}, \mu_{2}\in\mathcal M_{1}$ we have
\begin{equation}\label{eq1}
\lim_{t\to\infty}|\langle \varphi, P_{t}^{*}\mu_{1}\rangle-\langle \varphi, P_{t}^{*}\mu_{2}\rangle|=0.
\end{equation}
\end{proposition}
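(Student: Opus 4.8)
\emph{Proof idea.} We may assume $\|\varphi\|_{\infty}\le 1$, since $(\ref{eq1})$ is unchanged when $\varphi$ is multiplied by a positive constant. It suffices to prove $(\ref{eq1})$ for $\mu_1,\mu_2$ with bounded support: for any $\varepsilon>0$, tightness lets us write $\mu_i=(1-\varepsilon)\kappa_i+\varepsilon\sigma_i$ with $\kappa_i\in\mathcal M_1^A$, $A$ compact, so that $|\langle\varphi,P_t^*\mu_1\rangle-\langle\varphi,P_t^*\mu_2\rangle|\le|\langle\varphi,P_t^*\kappa_1\rangle-\langle\varphi,P_t^*\kappa_2\rangle|+2\varepsilon$ for all $t$, and $\varepsilon$ is arbitrary. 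So fix a bounded set $A$, measures $\mu_1,\mu_2\in\mathcal M_1^A$, and a target $\varepsilon>0$; the plan is to find $T$ with $|\langle\varphi,P_t^*\mu_1\rangle-\langle\varphi,P_t^*\mu_2\rangle|<\varepsilon$ for all $t\ge T$. First, apply the e--property at $z$ to $\varphi$, with tolerance $\eta$ (a small fixed fraction of $\varepsilon$), to obtain $\delta>0$ with
\[
|P_t\varphi(x)-P_t\varphi(z)|<\eta\qquad\text{for all }x\in B(z,\delta)\text{ and }t\ge 0;
\]
the ball $B(z,\delta)$ will be the ``target'' into which mass is to be driven.

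Next, I would assemble two global ingredients. Applying average boundedness to $(\mu_1+\mu_2)/2\in\mathcal M_1^A$ gives a bounded set $B$ (which we may take closed) with
\[
\limsup_{T\to\infty}\frac1T\int_0^T\bigl(P_s^*\mu_1(B)+P_s^*\mu_2(B)\bigr)\,\d s>2-2\varepsilon_0,
\]
$\varepsilon_0>0$ a small parameter; a Chebyshev--type estimate for this Ces\`aro average then shows that the set $G:=\{s\ge 0:\ P_s^*\mu_i(X\setminus B)<\varepsilon_1,\ i=1,2\}$, with $\varepsilon_1:=4\varepsilon_0$, is unbounded. Applying the concentrating property to the bounded set $A':=A\cup B$, with radius $\delta$, gives one constant $\alpha>0$ such that for every pair $\nu_1,\nu_2\in\mathcal M_1^{A'}$ there is $\tau>0$ with $P_\tau^*\nu_i(B(z,\delta))\ge\alpha$ for $i=1,2$. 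Working with this fixed $A'$ is the key device: it is what keeps $\alpha$ from degrading when the concentrating property is applied repeatedly below.

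The heart of the argument is an iteration producing times $0<u_1<u_2<\cdots$ and, in round $j$, sub-probability measures $\theta_1^{(j)},\theta_2^{(j)}$, both supported in $B(z,\delta)$ and of the common mass $\beta_j$, split off disjointly from $P_{u_j}^*\mu_i$. In round $j$ one carries the current unmatched remnant (of mass $q_{j-1}$, with $q_0=1$) forward to a time $r_j\in G$ beyond everything built so far; since that remnant is dominated by $P_{r_j}^*\mu_i$ and $r_j\in G$, restricting it to $B$ loses at most $\varepsilon_1$ in mass, so after renormalisation one obtains measures in $\mathcal M_1^B\subset\mathcal M_1^{A'}$, and the concentrating property supplies $\tau_j>0$ driving an $\alpha$-fraction of the restricted remnant into $B(z,\delta)$; set $u_j:=r_j+\tau_j$ and let $\theta_i^{(j)}$ be a suitable truncation of that piece, of mass $\beta_j:=\alpha(q_{j-1}-\varepsilon_1)$ (the same for $i=1,2$, since $\mu_1,\mu_2$ carry equal mass and the same amount is removed at each step). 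Then $q_j:=q_{j-1}-\beta_j$ obeys $q_j-\varepsilon_1=(1-\alpha)(q_{j-1}-\varepsilon_1)$, whence $q_n\le\varepsilon_1+(1-\alpha)^n$; choosing first $\varepsilon_0$ small (hence fixing $B$, $A'$, $\alpha$) and then $n$ large makes $q_n$ as small as we wish.

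Finally, for every $s\ge u_n$ one has the decomposition $P_s^*\mu_i=\sum_{j=1}^{n}P_{s-u_j}^*\theta_i^{(j)}+R_i(s)$, where $R_i(s)$ has mass $q_n$. As $P_{s-u_j}^*$ is linear and each $\theta_i^{(j)}$ lives on $B(z,\delta)$ with the common mass $\beta_j$, inserting $\beta_j\,P_{s-u_j}\varphi(z)$ and using the inequality defining $\delta$ yields $|\langle P_{s-u_j}\varphi,\theta_1^{(j)}\rangle-\langle P_{s-u_j}\varphi,\theta_2^{(j)}\rangle|\le 2\beta_j\eta$; summing over $j$ (with $\sum_j\beta_j\le 1$) and bounding $|\langle\varphi,R_i(s)\rangle|\le q_n$ gives $|\langle\varphi,P_s^*\mu_1\rangle-\langle\varphi,P_s^*\mu_2\rangle|\le 2\eta+2q_n$ for all $s\ge u_n$, which is $<\varepsilon$ for the choices above; undoing the first reduction completes the proof. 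The main obstacle is precisely the bookkeeping above: the unmatched remnant leaves $\mathcal M_1^A$ the moment it is evolved, so neither hypothesis applies to it verbatim. The two devices that make the iteration close are (i) invoking average boundedness only once, on $(\mu_1+\mu_2)/2$, so that the ``good'' time set $G$ is tied to the original data and can be reused in every round, and (ii) restricting the remnant to the fixed bounded set $B$ before each use of the concentrating property, which keeps the constant $\alpha$, and hence the geometric rate $1-\alpha$, unchanged across rounds. A last small point to check is that $G$ is genuinely unbounded — that $\mu_1$ and $\mu_2$ can be controlled at a common time — which is exactly why one passes to $(\mu_1+\mu_2)/2$.
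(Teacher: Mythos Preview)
Your proof is correct and follows the same overall strategy as the paper: reduce to boundedly supported measures, use the e--property at $z$ to fix a target ball $B(z,\delta)$, then iteratively split off common portions of mass from $P_t^*\mu_1$ and $P_t^*\mu_2$ into $B(z,\delta)$ using the concentrating property, with average boundedness ensuring the remnants can always be trapped in a fixed bounded set before each application of concentrating.

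The one noteworthy difference is in how the synchronization of the two measures is handled. The paper invokes average boundedness at two nested scales $R_0<R$: at each step it first finds (separate) times $s_1,s_2$ pushing most of $P^*_{\cdot}\mu_i$ into $B(x_0,R_0)$, restricts and normalises there, then applies average boundedness again (to the mixture of the two restricted remnants) to find a common time landing both in $B(x_0,R)$, and only then applies concentrating; this produces the splitting constant $\gamma=\alpha\varepsilon/2$. You instead apply average boundedness once, to $(\mu_1+\mu_2)/2$, and extract a single unbounded ``good time'' set $G$ that is common to both trajectories; since each remnant is dominated by the full evolved measure $P^*_{r_j}\mu_i$, restriction to $B$ at any time in $G$ costs at most $\varepsilon_1$ in mass. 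This removes the two-tier construction, keeps the splitting rate at essentially $\alpha$, and makes the recursion $q_j-\varepsilon_1=(1-\alpha)(q_{j-1}-\varepsilon_1)$ transparent. Both routes give the same conclusion; yours is a genuine streamlining of the bookkeeping.
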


\begin{proof} First observe that to finish the proof it is enough to show that condition (\ref{eq1}) holds for arbitrary Borel probability measures with bounded support.  Indeed, the set of all probability measures with bounded support is dense in the space $(\mathcal M_{1}, \|\cdot\|_{TV})$. Moreover, $P_{t}^{*}$, $t\ge 0$, is nonexpansive with respect to the total variation norm.

Fix $\varphi\in {L}_b(X)$, $x_0\in X$ and $\ep\in (0, 1/2)$. Let $\mu_{1}, \mu_{2}\in\mathcal M_{1}^{B(x_0, r_{0})}$ for some $r_{0}>0$.
Choose $\delta>0$ such that
\begin{equation}\label{nee1}
\sup_{t\ge 0}|P_{t}\varphi(x)-P_{t}\varphi(y)|<\ep/2
\end{equation}
for $x, y\in B(z, \delta)$, by the e-property.

Since $(P_{t})_{t\ge 0}$ is averagely bounded we may find $R_{0}>0$ such that
\begin{equation}\label{eq3}
\limsup_{T\to\infty}\frac{1}{T}\int_{0}^{T} P_{s}^{*}\mu(B(x_0, R_{0}))\d s>1-\ep^{2}/(4\|\varphi\|_{\infty})
\end{equation}
for any $\mu\in\mathcal M_{1}^{B(x_0, r_{0})}$.
Let $R>\max\{R_{0}, r_{0}\}$ satisfy
\begin{equation}\label{eq01}
\limsup_{T\to\infty}\frac{1}{T}\int_{0}^{T} P_{s}^{*}\mu(B(x_0, R))\d s>3/4
\end{equation}
for any $\mu\in\mathcal M_{1}^{B(x_0, R_{0})}$.
Since $(P_{t})_{t\ge 0}$ is concentrating at $z$ we may choose $\alpha>0$ such that for any $\nu_1, \nu_2\in \mathcal M_1^{B(x_0, R)}$ there exists $t>0$ and the condition
\begin{equation}\label{eq02}
P_{t}^{*}\nu_i (B(z, \delta))\ge\alpha\qquad\text{for $i=1, 2$}
\end{equation}
holds.

Set $\gamma:=\alpha\ep/2>0$. Let $k$ be the minimal integer such that $4(1-\gamma)^{k}\|\varphi\|_{\infty}\le\ep$.

We will show by induction that for every $l\le k$, $l\in\mathbb N$, there exist $t_{1},\ldots, t_{l}>0$ and
$\nu^{i}_{1},\ldots, \nu^{i}_{l}, \mu^{i}_{l}\in\mathcal M_{1}$ such that $\nu^{i}_{j}\in\mathcal M_{1}^{B(z, \delta)}$ for $j=1, \ldots, l$ and
\begin{equation}\label{eq4}
\begin{aligned}
P_{t_{1}+\cdots+t_{l}}^{*}\mu_{i}&=\gamma P_{t_{2}+\cdots+t_{l}}^{*}\nu^{i}_{1}+\gamma(1-\gamma) P_{t_{3}+\cdots+t_{l}}^{*}\nu^{i}_{2}\\
&+\cdots +\gamma(1-\gamma)^{l-1}\nu^{i}_{l} +(1-\gamma)^{l}\mu^{i}_{l}\qquad\text{for $i=1, 2$}.
\end{aligned}
\end{equation}

Indeed, let $t_{1}>0$ be such that
$$
P_{t_{1}}^{*}\mu_{i}(B(z, \delta))\ge\alpha>\gamma\qquad\text{for $i=1, 2$}.
$$
Set
\begin{equation}\label{eq5}
\nu^{i}_{1}=\frac{P_{t_{1}}^{*}\mu_{i}(\,\cdot\,\cap B(z, \delta))}{P_{t_{1}}^{*}\mu_{i}(B(z, \delta))},
\end{equation}
$$
\mu_{1}^{i}=(1-\gamma)^{-1}(P_{t_{1}}^{*}\mu_{i}-\gamma\nu_{1}^{i})\qquad\text{for $i=1, 2$}
$$
and observe that $ \mu^{i}_{1}\in\mathcal M_{1}$ and $\nu^{i}_{1}\in\mathcal M_{1}^{B(z, \delta)}$ for $i=1, 2$. Then condition (\ref{eq4}) holds for $l=1$.

Now assume that we have done it for some $l$ and $4(1-\gamma)^{l}\|\varphi\|_{\infty}>\ep$. Then there exist $s_{i}>0$ for $i=1, 2$ such that
$$
P_{t_{1}+\cdots+t_{l}+s_{i}}^{*}\mu_{i}(X\setminus B(x_0, R_{0}))<\ep^{2}/(4\|\varphi\|_{\infty})
$$
for $i=1, 2$, by (\ref{eq3}). Since $(1-\gamma)^{l}>\ep/(4\|\varphi\|_{\infty})$, from the linearity of $P_{s_{i}}^{*}$ we obtain that
$$
P_{s_{i}}^{*}\mu_{l}^{i}(B(x_0, R_{0}))>\ep\qquad\text{for $i=1, 2$}.
$$
Thus we may find two measures $\tilde{\mu}_{l}^{1}, \tilde\mu_{l}^{2}\in\mathcal M_{1}^{B(x_0, R_0)}$ such that
\begin{equation}\label{eq7}
P_{s_{i}}^{*}\mu_{l}^{i}\ge\ep\tilde{\mu}_{l}^{i}.
\end{equation}
These measures may be defined as restriction of $P_{s_{i}}^{*}\mu_{l}^{i}$ to $B(x_0, R_0)$ respectively normed (see formula (\ref{eq5})). Further, from (\ref{eq01}) it follows that
$$
\begin{aligned}
&\limsup_{T\to\infty}\frac{1}{T}\int_{0}^{T} [P_{s+s_{2}}^{*}(\tilde{\mu}^{1}_{l}/2)(B(x_0, R))+P_{s+s_{1}}^{*}(\tilde{\mu}^{2}_{l}/2)(B(x_0, R))]\d s\\
&=\limsup_{T\to\infty}\frac{1}{T}\int_{0}^{T} P_{s}^{*}(\tilde{\mu}_{l}^{1}/2+\tilde{\mu}_{l}^{2}/2)(B(x_0, R))\d s>3/4,
\end{aligned}
$$
by the fact that $\tilde{\mu}_{l}^{1}/2+\tilde{\mu}_{l}^{2}/2\in \mathcal M_1^{B(x_0, R_0)}$. Consequently, for some $s>0$ we have
$$
P_{s+s_{2}}^{*}\tilde{\mu}^{1}_{l}(B(x_0, R))\ge1/2\quad\text{and}\quad P_{s+s_{1}}^{*}\tilde{\mu}^{2}_{l}(B(x_0, R))\ge1/2.
$$
Comparing (\ref{eq7}) and the above we obtain
$$
P_{s+s_{1}+s_{2}}^{*}\mu_{l}^{i}\ge(\ep/2)\hat{\mu}_{l}^{i}
$$
for some $\hat\mu_{l}^{i}\in\mathcal M_{1}^{B(x_0, R)}$, $i=1, 2$, by argument similar to that in (\ref{eq7}). Using it once again and taking into consideration
(\ref{eq02}) we obtain that there exists $t>0$ such that
$$
P_{t+s+s_{1}+s_{2}}^{*}\mu_{l}^{i}\ge(\alpha\ep/2)\nu_{l+1}^{i}=\gamma\nu_{l+1}^{i}
$$
for some $\nu_{l+1}^{i}\in\mathcal M_{1}^{B(z, \delta)}$ for $i=1, 2$.
Therefore, setting $t_{l+1}=t+s+s_{1}+s_{2}$ we obtain
$$
\begin{aligned}
P_{t_{1}+\cdots+t_{l}+t_{l+1}}^{*}\mu_{i}&=\gamma P_{t_{2}+\cdots+t_{l+1}}^{*}\nu^{i}_{1}+\gamma(1-\gamma) P_{t_{3}+\cdots+t_{l+1}}^{*}\nu^{i}_{2}\\
&+\cdots +\gamma(1-\gamma)^{l-1}P_{t_{l+1}}^{*}\nu^{i}_{l} +\gamma(1-\gamma)^{l}\nu^{i}_{l+1}+(1-\gamma)^{l+1}\mu_{l+1}^{i},
\end{aligned}
$$
where
$$
\mu_{l+1}^{i}=(1-\gamma)^{-1}(P_{t_{l+1}}^{*}\mu_{l}^{i}-\gamma\nu_{l+1}^{i})\qquad\text{for $i=1, 2$}.
$$
This completes the proof of condition (\ref{eq4}). In turn, this and (\ref{nee1}) give for $t\ge t_{1}+\cdots+t_{k}$
$$
\begin{aligned}
|\langle \varphi, P_{t}^{*}\mu_{1}\rangle-\langle \varphi, P_{t}^{*}\mu_{2}\rangle|&=
|\langle  P_{t-(t_{1}+\cdots t_{k}}\varphi, P_{t_{1}+\cdots t_{k}}^{*}\mu_{1}\rangle-
\langle  P_{t-(t_{1}+\cdots t_{k}}\varphi, P_{t_{1}+\cdots t_{k}}^{*}\mu_{2}\rangle|\\
&\le \gamma |\langle P_{t-t_{1}}\varphi, \nu_{1}^{1}-\nu_{1}^{2}\rangle| +
\gamma(1-\gamma)| \langle P_{t-(t_{1}+t_{2})}\varphi, \nu_{2}^{1}-\nu_{2}^{2}\rangle|+\cdots\\
&+\gamma(1-\gamma)^{k-1}|\langle P_{t-(t_{1}+\cdots+t_{k})}\varphi, \nu_{k}^{1}-\nu_{k}^{2}\rangle|
+2(1-\gamma)^{k}\|\varphi\|_{\infty}\\
&\le (\gamma+\gamma(1-\gamma)+\cdots+\gamma(1-\gamma)^{k-1})\sup_{t\ge 0,\, x, y\in B(z, \delta)}|P_{t}\varphi(x)-P_{t}\varphi(y)|\\
&+\ep/2\le\ep/2 +\ep/2=\ep.
\end{aligned}
$$
Since $\ep>0$ was arbitrary, the proof is complete.
\end{proof}

\begin{proposition}\label{prop2}
Assume that there exists $z\in X$ such that for any $\ep>0$
\begin{equation}\label{inv}
\limsup_{T\to\infty}\sup_{\mu\in\mathcal M_{1}}\frac{1}{T}\int_{0}^{T}P_{s}^{*}\mu(B(z, \ep))\d s>0.
\end{equation}
If $(P_{t})_{t\ge 0}$ satisfies the e--property, then it admits an invariant measure.
\end{proposition}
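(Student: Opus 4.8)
The plan is to run the Krylov--Bogolyubov scheme; the only real difficulty is that $X$ need not be locally compact, so tightness of the Cesàro averages is not automatic, and this is exactly where the e--property enters. Write $Q^\mu_T:=\frac1T\int_0^T P_s^*\mu\,\d s\in\mathcal M_1$. First I record two elementary facts. \emph{(a)} For every $\mu\in\mathcal M_1$, $t\ge0$ and $\psi\in B_b(X)$,
\[
\|P_t^*Q^\mu_T-Q^\mu_T\|_{TV}\le\frac{2t}{T},
\]
since $P_t^*Q^\mu_T-Q^\mu_T=\frac1T\big(\int_T^{T+t}P_s^*\mu\,\d s-\int_0^t P_s^*\mu\,\d s\big)$. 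Hence, if $Q^{\mu_n}_{T_n}\to\mu_*$ weakly for some $T_n\to\infty$ and some $\mu_*\in\mathcal M_1$, then for $\psi\in L_b(X)$ and $t\ge0$,
\[
\langle\psi,P_t^*\mu_*\rangle=\lim_n\langle P_t\psi,Q^{\mu_n}_{T_n}\rangle=\lim_n\langle\psi,Q^{\mu_n}_{T_n}\rangle=\langle\psi,\mu_*\rangle,
\]
where the first equality uses the Feller property ($P_t\psi\in C_b(X)$); as $L_b(X)$ is measure--determining, $\mu_*$ is invariant. So it suffices to produce \emph{one} sequence $T_n\to\infty$, $\mu_n\in\mathcal M_1$, with $Q^{\mu_n}_{T_n}$ weakly convergent to a probability measure. \emph{(b)} Since $Q^\mu_T(A)=\int_X Q^{\delta_x}_T(A)\,\mu(\d x)$, condition \REQ{inv} is equivalent to: for every $\ep>0$, $\limsup_{T\to\infty}\sup_{x\in X}Q^{\delta_x}_T(B(z,\ep))>0$.

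Fix a complete metric compatible with the topology of $X$ (possible, $X$ being Polish); then $(\mathcal M_1,d)$ with the Fortet--Mourier distance $d(\mu,\nu)=\sup\{|\langle\psi,\mu-\nu\rangle|:\|\psi\|_\infty\le1,\ \Lip(\psi)\le1\}$ is a \emph{complete} metric space metrizing weak convergence. Using \REQ{inv} in the form \emph{(b)}, for each $n$ I pick $T_n>n$ and $x_n\in X$ with $Q^{\delta_{x_n}}_{T_n}(B(z,1/n))>0$ and set $\mu_n:=\delta_{x_n}$ (alternatively one may keep $\ep$ fixed and obtain $Q^{\mu_n}_{T_n}(B(z,\ep))\ge c>0$). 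The goal is then to show that a subsequence of $(Q^{\mu_n}_{T_n})$ is $d$--Cauchy: by completeness it has a limit $\mu_*\in\mathcal M_1$, and fact \emph{(a)} shows $\mu_*$ is invariant, which finishes the proof.

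Establishing this Cauchy property is the step I expect to be the crux. The mechanism is: by the e--property, for fixed $\psi\in L_b(X)$ and $\eta>0$ there is $\delta>0$ with $|P_t\psi(y)-P_t\psi(z)|<\eta$ for all $y\in B(z,\delta)$ and all $t\ge0$ --- so along the part of the evolution spent in a small neighbourhood of $z$ the observable $P_t\psi$ is pinned to $P_t\psi(z)$, uniformly in $t$ and uniformly over all probability measures supported near $z$ --- while \REQ{inv} guarantees that for arbitrarily small $\delta$ a uniformly positive fraction of the time--averaged mass of $Q^{\mu_n}_{T_n}$ sits in $B(z,\delta)$. Writing $Q^{\mu_n}_{T_n}=p_n\alpha_n+(1-p_n)\beta_n$ with $\alpha_n\in\mathcal M_1^{\overline{B(z,\delta)}}$ and $\inf_n p_n>0$, applying a further Cesàro average and invoking the pinning, the ``cores'' $\alpha_n$ all become comparable to $Q^{\delta_z}_S$ up to $\eta$, uniformly in $n$ and $S$; iterating this extraction --- which, lacking the concentrating and average--boundedness hypotheses of Proposition~\ref{prop1}, must now be fed only by \REQ{inv} and the Feller property to re--enter neighbourhoods of $z$, the delicate point --- one peels off from $Q^{\mu_n}_{T_n}$ and $Q^{\mu_m}_{T_m}$ convex combinations of measures near $z$ on which all $P_t\psi$ agree up to $\eta$, leaving a remainder of geometrically small total mass, and this forces $|\langle\psi,Q^{\mu_n}_{T_n}\rangle-\langle\psi,Q^{\mu_m}_{T_m}\rangle|$ to be small for $n,m$ large. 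The remaining difficulty is to make this bound uniform over all $\psi$ with $\|\psi\|_\infty\le1$, $\Lip(\psi)\le1$ (needed for $d$), which the e--property does not grant directly; I would handle it by a diagonal argument over a countable family of test functions determining for weak convergence, together with the equiboundedness and equi--Lipschitz character of the competitors. Then completeness of $(\mathcal M_1,d)$ and fact \emph{(a)} conclude.
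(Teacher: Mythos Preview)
Your Krylov--Bogolyubov setup and fact \emph{(a)} are correct and standard; the gap is exactly where you yourself say ``the delicate point''. The iterative extraction you sketch --- peel off a piece of $Q^{\mu_n}_{T_n}$ supported in $B(z,\delta)$, use the e--property to pin it to $\delta_z$, then repeat on the remainder --- is the mechanism of Proposition~\ref{prop1}, but there the average--boundedness and concentrating hypotheses are precisely what drive the iteration: they force the \emph{remainder}, an arbitrary probability measure, back into a neighbourhood of $z$ after further evolution. Condition~\REQ{inv} is far weaker: the supremum over $\mu$ lets you choose the most favourable initial measure for each $T$, but after one extraction the residuals $\beta_n$ are specific measures about which \REQ{inv} says nothing. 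There is no reason $\beta_n$, or any Cesàro average built from it, should ever put uniformly positive mass near $z$, so the iteration stalls after one step and the remainder never becomes geometrically small. Your proposed fix for the uniformity in $\psi$ is also circular: a diagonal argument over a countable determining family yields, at best, convergence of $\langle\psi,Q^{\mu_{n_k}}_{T_{n_k}}\rangle$ for each $\psi$ in the family, but on a non--locally--compact $X$ this does not produce a probability limit without tightness --- which is exactly the missing ingredient you set out to supply.

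The paper proceeds by contradiction and along an entirely different line. Assuming no invariant measure, Step~I of Theorem~3.1 in \cite{Lasota-Szarek} yields an $\ep>0$, times $q_i\uparrow\infty$, and pairwise $\ep$--separated compact sets $K_i$ with $P_{q_i}^*\delta_z(K_i)\ge\ep$. A counting argument using \REQ{inv} (if every $y$ near $z$ sent mass $\ge\ep/2$ into each $K_i^{\ep/3}$, then summing over $i_0\le i\le N$ overfills a probability measure) shows that for every neighbourhood $U$ of $z$ and every $i_0$ there exist $y\in U$ and $i\ge i_0$ with $P_{q_i}^*\delta_y(K_i^{\ep/3})<\ep/2$. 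From this one constructs, as in Step~III of \cite{Lasota-Szarek}, a single $f=\sum_n f_n\in L_b(X)$ and points $y_n\to z$, reals $p_n$, with $|P_{p_n}f(z)-P_{p_n}f(y_n)|>\ep/8$ for all $n$, contradicting equicontinuity of $(P_tf)_{t\ge0}$ at $z$. Note that the e--property is used only at the very end, for one explicit test function; no tightness or Cauchy argument is needed.
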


\begin{proof}
Assume, contrary to our claim, that $(P_{t})_{t\ge 0}$ does not possess any invariant measure.
From Step I of Theorem 3.1 in
\cite{Lasota-Szarek} it follows that then there exists
an $\ep>0$, a sequence of compact sets $(K_i)_{i\ge 1}$, and an increasing sequence of positive reals $(q_i)_{i\ge 1}$,
$q_i\to\infty$, satisfying
$$
P_{q_i}^{*}\delta_z(K_i)\ge\ep
\qquad\text{for $i\in\mathbb N$}
$$
and
$$
\min\{\rho(x,y): x\in K_i, y\in K_j\}\ge\ep
\qquad\text{for $i\not= j$, $i, j\in\mathbb N$}.
$$

We will show that
for every open neighborhood $U$ of $z$ and every $i_0\in\mathbb N$ there exists $y\in U$ and $i\ge i_0$, $i\in\mathbb N$, such that
$$
P_{q_i}^{*}\delta_y\left( K_i^{\ep/3}\right)<\ep/2,
$$
where $K_i^{\ep/3}=\{y\in X: \inf_{v\in K_{i}} \rho (y, v)<\ep/3\}$.

On the contrary, suppose that there exists an open neighbourhood $U$ of $z$ and $i_0\in\mathbb N$ such that
\begin{equation}\label{a3}
\inf\left\{P_{q_i}^{*}\delta_y\left( K_i^{\ep/3}\right): y\in U, i\ge i_0\right\}\ge\ep/2.
\end{equation}
Clearly
\begin{equation}\label{a4}
\limsup_{T\to\infty}\sup_{\mu\in\mathcal M_{1}}\frac{1}{T}\int_{0}^{T}P_{s}^{*}\mu(U)\d s>\alpha
\end{equation}
for some $\alpha>0$.
Further, let $N\in\mathbb N$ satisfy
$
(N-i_0+1) \alpha \ep>2.
$
Choose $\gamma\in (0,\alpha\ep/2)$ such that
$$
(N-i_0+1)(\alpha \ep-2 \gamma)>2.
$$
It easily follows that there exists $T_0>0$ such that
for any $\mu\in\mathcal M_1$ and $T\ge T_0$ we have
$$
\max_{i\le N}\left|\left|\frac{1}{T}\int_{0}^{T}P_{s}^{*}\mu\,\d s - \frac{1}{T}\int_{0}^{T}P_{s+q_i}^{*}\mu\,\d s\right|\right|_{TV}<\gamma.
$$
Choose $T\ge T_0$ and $\mu\in\mathcal M_1$ such that
\begin{equation}\label{a5}
\frac{1}{T}\int_{0}^{T}P_{s}^{*}\mu(U)\d s\ge\alpha,
\end{equation}
by (\ref{a4}). From (\ref{a3}) and the Markov property it follows that
$$
P_{s+q_i}^{*}\mu \left( K_i^{\ep/3}\right)=\int_X P_{q_i}^{*}\delta_y \left( K_i^{\ep/3}\right)P_{s}^{*}(\d y)\ge
\int_U P_{q_i}^{*}\delta_y \left( K_i^{\ep/3}\right)P_{s}^{*}(\d y)\ge \frac{\ep}{2} P_{s}^{*}\mu(U)
$$
for $i\ge i_0$ and $s\ge 0$. Consequently, we have for $i_0\le i\le N$
$$
\begin{aligned}
\frac{1}{T}\int_{0}^{T} P_{s}^{*}\mu \left( K_i^{\ep/3}\right) \d s&\ge
\frac{1}{T}\int_{0}^{T} P_{s+q_i}^{*}\mu \left( K_i^{\ep/3}\right) \d s -\gamma\\
&\ge
\frac{\ep}{2}
\,\frac{1}{T}\int_{0}^{T}
P_{s}^{*}\mu(U)\d s -\gamma
\ge\frac{\ep}{2}\alpha-\gamma,
\end{aligned}
$$
by (\ref{a5}). From this and the fact that
$K_i^{\ep/3}\cap  K_j^{\ep/3}=\emptyset$ for $i\not =j$ we obtain
$$
\begin{aligned}
\frac{1}{T}\int_{0}^{T} P_{s}^{*}\mu \left(\bigcup_{i=i_0}^N K_i^{\ep/3}\right) \d s
&=\sum_{i=i_0}^N
\frac{1}{T}\int_{0}^{T} P_{s}^{*}\mu \left( K_i^{\ep/3}\right) \d s\\
&\ge
(N-i_0+1)(\ep\alpha -2\gamma)/2>1,
\end{aligned}
$$
which is impossible.

Now analogously as in the proof of Theorem 3.1 in \cite{Lasota-Szarek}, Step III, we define a sequence
of Lipschitzian functions $(f_n)_{n\ge 1}$, a sequence of points $(y_n)_{n\ge 1}$, $y_n\to z$ as $n\to \infty$, two increasing sequences of integers
$(i_n)_{n\ge 1}$, $(k_n)_{n\ge 1}$, $i_n<k_n<i_{n+1}$ for $n\in\mathbb N$, and a sequence of reals $(p_n)_{n\ge 1}$ such that
\begin{equation}\label{a6}
f_n |_{K_{i_n}}=1, \qquad 0\le f_n\le \text{\bf 1}_{K_{i_n}^{\ep/3}},\qquad\text{Lip }f_n\le 3/\ep,
\end{equation}

\begin{equation}\label{a7}
\left|P_{p_n}\left(\sum_{i=1}^nf_i\right)(z)-P_{p_n}\left(\sum_{i=1}^nf_i\right)(y_n)\right|>\frac{\ep}{4},
\end{equation}

\begin{equation}\label{a8}
P_{p_n}^{*}\delta_u\left(\bigcup_{i=k_n}^\infty K_i^{\ep/3}\right)<\frac{\ep}{16}\quad\text{for }u\in\{z,y_n\}
\end{equation}
for every $n\in\mathbb N$. From (\ref{a6})-(\ref{a8}) it follows (see the proof of Theorem 3.1 in \cite{Lasota-Szarek}, Step III, once again)
that
$$
|P_{p_n} f(z)-P_{p_n}f(y_n)|>\frac{\ep}{8}
$$
for $n\in\mathbb N$ and $f:=\sum_{n=1}^\infty f_n\in L_{b}(X)$. Since $y_n\to z$ as $n\to\infty$, this contradicts the assumption that the family
$\{P_t f: t\ge 0\}$
is equicontinuous in $z$. The proof is complete.
\end{proof}

\begin{theorem}\label{Mainthe}
Let $(P_{t})_{t\ge 0}$ be averagely bounded and concentrating at some $z\in X$. If $(P_{t})_{t\ge 0}$ satisfies the e--property, then it is asymptotically stable.
\end{theorem}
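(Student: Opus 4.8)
\textit{Plan.} The theorem should follow by combining the two preceding propositions: Proposition~\ref{prop2} will supply an invariant measure $\mu_*$, and Proposition~\ref{prop1} will then force every orbit $P_t^*\mu$ to converge weakly to it. The only genuine work is to deduce the hypothesis (\ref{inv}) of Proposition~\ref{prop2} from average boundedness together with the concentrating property.

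\textbf{Step 1 (existence of an invariant measure).} I would first check (\ref{inv}). Apply average boundedness with $\ep=1/2$ to the bounded set $A=\{z\}$, i.e. to the single measure $\delta_z$: this yields a bounded Borel set $B_0$ and a sequence $T_n\to\infty$ with $\frac{1}{T_n}\int_0^{T_n}P_s^*\delta_z(B_0)\,\d s>1/2$. Put $\mu_n:=\frac{1}{T_n}\int_0^{T_n}P_s^*\delta_z\,\d s$, so that $\mu_n\ge\tfrac12\tilde\mu_n$, where $\tilde\mu_n$ is the normalised restriction of $\mu_n$ to $B_0$; then $\tilde\mu_n\in\mathcal M_1^{\overline{B_0}}$. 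Now fix $\ep>0$ and apply the concentrating property at $z$ to the bounded set $\overline{B_0}$: it produces an $\alpha>0$ \emph{independent of $n$} and times $t_n>0$ with $P_{t_n}^*\tilde\mu_n(B(z,\ep))\ge\alpha$, hence $P_{t_n}^*\mu_n(B(z,\ep))\ge\alpha/2$. By the semigroup property $P_{t_n}^*\mu_n=\frac{1}{T_n}\int_0^{T_n}P_s^*\big(P_{t_n}^*\delta_z\big)\,\d s$, so with $\mu:=P_{t_n}^*\delta_z\in\mathcal M_1$ we get $\frac{1}{T_n}\int_0^{T_n}P_s^*\mu(B(z,\ep))\,\d s\ge\alpha/2$ for every $n$. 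Letting $n\to\infty$ gives $\limsup_{T\to\infty}\sup_{\mu\in\mathcal M_1}\frac1T\int_0^T P_s^*\mu(B(z,\ep))\,\d s\ge\alpha/2>0$, which is (\ref{inv}). Since $(P_t)_{t\ge0}$ has the e--property, Proposition~\ref{prop2} now gives an invariant measure $\mu_*\in\mathcal M_1$.

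\textbf{Step 2 (convergence and conclusion).} By Proposition~\ref{prop1}, for every $\varphi\in L_b(X)$ and every $\mu\in\mathcal M_1$ we have $|\langle\varphi,P_t^*\mu\rangle-\langle\varphi,P_t^*\mu_*\rangle|\to0$ as $t\to\infty$; since $P_t^*\mu_*=\mu_*$, this reads $\langle\varphi,P_t^*\mu\rangle\to\langle\varphi,\mu_*\rangle$ for all $\varphi\in L_b(X)$. On a Polish space the bounded Lipschitz functions metrise the weak topology on $\mathcal M_1$, so $P_t^*\mu$ converges weakly to $\mu_*$ for every $\mu\in\mathcal M_1$. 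Hence $(P_t)_{t\ge0}$ is asymptotically stable, and the uniqueness of $\mu_*$ is automatic.

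\textbf{Expected main obstacle.} Steps 2 is routine given Propositions~\ref{prop1}--\ref{prop2}; the substantive point is Step 1. Neither hypothesis alone gives (\ref{inv}) --- concentrating at $z$ applied to $\{z\}$ only produces a single good time, not a positive time--average --- so the argument must first use average boundedness to trap the orbit of $\delta_z$ inside one fixed bounded set along a sequence of times, and only then invoke the concentrating property \emph{uniformly} over the resulting family $\{\tilde\mu_n\}$. Getting the quantifiers in that order, and recognising that $P_{t_n}^*\mu_n$ is again a time--average (of the pushed--forward initial measure), is the crux.
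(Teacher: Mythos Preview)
Your proof is correct and follows essentially the same route as the paper: use average boundedness to trap the time averages of a single orbit inside a fixed bounded set, extract normalised restrictions supported there, apply the concentrating property uniformly over these restrictions to verify condition~(\ref{inv}), then invoke Proposition~\ref{prop2} for existence and Proposition~\ref{prop1} for convergence. The only cosmetic differences are that the paper starts from an arbitrary $\delta_x$ rather than $\delta_z$ and names the bounded set $B(x,R)$ rather than $B_0$, and it cites the Alexandrov theorem where you appeal to the fact that bounded Lipschitz functions determine weak convergence.
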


\begin{proof}
Fix $x\in X$. Since $(P_{t})_{t\ge 0}$ is averagely bounded there is $R>0$ such that
$$
\limsup_{T\to\infty}\frac{1}{T}\int_{0}^{T}P_{s}^{*}\delta_{x}(B(x, R))\d s>\frac{1}{2}.
$$
Let $(T_n)_{n\ge 1}$ be an increasing sequence of reals such that $T_n\to\infty$ as $n\to\infty$ and
$$
\frac{1}{T_n}\int_{0}^{T_n}P_{s}^{*}\delta_{x}(B(x, R))\d s >\frac{1}{2} \qquad \text{for } n\in \mathbb N.
$$
Set $\displaystyle\mu_n= \frac{1}{T_n}\int_{0}^{T_n}P_{s}^{*}\delta_{x}\,\d s$,
$n\in\mathbb N$,
and observe that there are $\mu_n^R\in\mathcal M_1^{B(x, R)}$ such that
$$
\mu_n\ge\frac{1}{2}\mu_n^R\qquad\text{for }n\in\mathbb N.
$$
Indeed, we may define $\mu_n^R$ by the formula
$
\mu_n^R={\mu_n(\,\cdot \,\cap B(x, R))}/{\mu_n(B(x, R))}\quad\text{ for $n\in\mathbb N$.}
$
Further, observe that, by concentrating at $z$, for fixed $\ep>0$ there is $\alpha>0$ such that
we have
$$
P_{s_n}^{*}\mu_n^R(B(z, \ep)) \ge \alpha
$$
for some $s_n>0$, $n\in\mathbb N$.
Hence
$$
P_{s_n}^{*}\mu_n(B(z, \ep)) \ge \frac{1}{2}\alpha
\qquad\text{for }n\in\mathbb N,
$$
by linearity of $(P_t^{*})_{t\ge 0}$. Consequently,
$$
\frac{1}{T_n}\int_0^{T_n}P_s^{*}(P_{s_n}^{*}\delta_x)(B(z, \ep))\d s \ge \frac{1}{2}\alpha
\qquad\text{for }n\in\mathbb N,
$$
and condition (\ref{inv}) in Proposition 2 is satisfied. Now Proposition 2 implies the existence of an invariant measure.
Further, from Proposition 1 it follows that for any $f\in L_{b}(X)$ and $\mu\in\mathcal M_{1}$
$$
\langle \varphi, P_{t}^{*}\mu\rangle\to \langle\varphi, \mu_{*}\rangle
$$
as $t$ tends to $+\infty$.
Application of the Alexandrov theorem finishes the proof (see \cite{Bill}).
\end{proof}

\section{The models}
\subsection{GOY and Sabra shell models and functional setting}

Let $u=(u_{-1}, u_0, u_1,\ldots)$ be an infinite sequence of complex valued functions on $[0, \infty)$ satisfying the following equations for $n=1, 2,\ldots$
\begin{equation}\label{shell_n}
\d u_{n}(t)+\nu k_{n}^{2}\nu_{n}(t)\d t+[B(u,u)]_n\d t =\sigma_{n }\d  w_{n}
\end{equation}
with the initial conditions
$$
u_{-1}(t)= u_{0}(t)=0\quad\text{and}\quad u_{n}(0)=\xi_{n}.
$$

%\begin{equation}
%\d u_{n}(t)+\nu k_{n}^{2}\nu_{n}(t)\d t=[k_{n-1} u_{n-1}^{2}(t)-k_{n} u_{n}(t) u_{n+1}(t)]\d t +\sigma_{n %}\d  \beta_{n}.
%\end{equation}

Here $k_{n}=k_{0}2^{n}$, $k_{0}>1$ and $\nu>0$. Moreover $(w_{n}(t))_{n\ge 1}$ denotes a sequence of independent Brownian motions on some probability space $(\Omega, \mathcal F, \mathbb P)$. It is assumed that $\sigma_n\in\mathbb C$ and there is $n_0\in\mathbb N$ such that $\sigma_{n}=0$ for $n\ge n_{0}$. Further $B$ is a bilinear operator which will be defined later on.

Let $H$ be the set of all sequences
 $u=(u_1, u_2,\ldots)$ of complex numbers
such that $\sum_n |u_n|^2<\infty$. We consider $H$ as a \emph{real} Hilbert space
endowed  with the inner product $(\cdot,\cdot)$ and the norm $|\cdot|$ of the form
\begin{equation}\label{normH}
(u,v)={\rm Re}\,\sum_{n\geq 1}u_n v_n^*,\quad
|u|^2 =\sum_{n\geq 1} |u_n|^2,
\end{equation}
where $v_n^*$ denotes the complex conjugate of $v_n$. The space $H$ is separable.
Let $A:\D (A)\subset H \to H $ be the non-bounded linear operator defined by
\[
(Au)_n = k_n^2 u_n,\quad n=1,2,\ldots,\qquad \D (A)=\Big\{ u\in H\,
:\; \sum_{n\geq 1} k_n^4 |u_n|^2<\infty\Big\}.
\]
The operator $A$ is clearly self-adjoint, strictly positive definite since $(Au,u)\geq k_0^2 |u|^2$
for $u\in \D (A)$.
For any $\alpha >0$, set
$$
{\mathcal H}_\alpha = \D(A^\alpha) = \{ u\in H \, :\, \sum_{n\geq 1} k_n^{4\alpha} |u_n|^2 <+\infty\},\;
\|u\|^2_\alpha = \sum_{n\geq 1} k_n^{4\alpha} |u_n|^2 \; \mbox{\rm for }\; u\in {\mathcal H}_\alpha.
$$
Obviously  ${\mathcal H}_0=H$. Define
\[ V:=\D (A^{\frac{1}{2}}) =
 \Big\{ u\in H \, :\, \sum_{n\geq 1} k_n^2 |u_n|^2 <+\infty\Big\}\]
  and set
  \[
 \;{\mathcal H}= {\mathcal H}_{\frac{1}{4}},\, \|u\|_{\mathcal H} = \|u\|_{\frac{1}{4}}
 . \]
Then $V$
is a Hilbert space for the scalar product $(u,v)_V = {\rm Re} (\sum_n k_n^2\, u_n\, v_n^*)$, $u,v\in V$
and the associated norm is denoted by
$$
\|u\|^2 = \sum_{n\geq 1} k_n^2\, |u_n|^2.
$$
The adjoint of $V$ with respect to  the $H$ scalar product
is $V' = \{ (u_n)\in {\mathbb C}^{\mathbb N} \, :\,
\sum_{n\geq 1} k_n^{-2}\, |u_n|^2 <+\infty\}$ and $V\subset H\subset V'$
is a Gelfand triple. Let $\langle u\, ,\, v\rangle_{V', V} = {\rm Re}\left (\sum_{n\geq 1} u_n\, v_n^*\right)$ denote
the duality between $u\in V'$ and $v\in V$.

Set $ u_{-1}= u_{0}=0 $, let $a,b$ be real numbers and let
 $B : H\times V \to H$ (or  $B : V\times H \to H$) denote the bilinear operator  defined by
$$
\left[B(u,v)\right]_n= i\left( a k_{n+1} u_{n+1}^* v_{n+2}^*
+b k_{n} u_{n-1}^* v_{n+1}^* -a k_{n-1} u_{n-1}^* v_{n-2}^*
-b k_{n-1} u_{n-2}^* v_{n-1}^*
\right)
$$
for $n=1,2,\ldots$ in the  GOY shell model (see, e.g. \cite{OY89})
or
$$
\left[B(u,v)\right]_n= i\left( a k_{n+1} u_{n+1}^*\,  v_{n+2}
+b k_{n} u_{n-1}^* v_{n+1} +a k_{n-1} u_{n-1} v_{n-2}
+b k_{n-1} u_{n-2} v_{n-1}
\right),
$$
in the  Sabra shell model introduced in \cite{LPPPV98}.

Obviously, there exists $C>0$ such that
\begin{equation}\label{nee2}
|B(u, v)|\le C\|u\||v|\qquad\text{for $u\in V$ and $v\in H$}.
\end{equation}

Note that $B$ can be extended as a bilinear operator from $H\times H$ to $V'$ and that  there exists
a constant ${C}>0$ such that  given
$u,v\in H$ and $w\in V$ we have
\begin{equation}\label{trili}
|\langle B(u,v)\, ,\, w\rangle_{V', V} | + |\big( B(u,w)\, ,\, v\big) | + |\big( B(w,u)\, ,\, v\big) |
\leq {C}\, |u|\, |v|\, \|w\|.
\end{equation}
An easy computation proves that for $u,v\in H$ and $w\in V$ (resp. $v,w\in H$ and $u\in V$),
$$
\langle B(u,v)\, ,\, w\rangle_{V', V} = - \big(B(u,w)\, ,\, v\big) \; \mbox{\rm (resp.  }\,
\big( B(u,v)\, ,\, w\big)  = - \big(B(u,w)\, ,\, v\big) \mbox{\rm  \;)}.
$$
Hence $(B(v, u), u)=0$ for $u\in H$ and $v\in V$.
Furthermore, $B:V\times V\to V$ and $B : {\mathcal H}\times {\mathcal H} \to H$;
 indeed, for $u,v\in V$  (resp.  $u,v\in {\mathcal H}$)
 we have
$$
\begin{aligned}
 \|B(u,v)\|^2 & = \sum_{n\geq 1} k_n^2\, |B(u,v)_n|^2\, \leq C\, \|u\|^2 \sup_n k_n^2 |v_n|^2
\leq C\, \|u\|^2\, \|v\|^2,\\
| B(u,v)| & \leq  C\, \|u\|_{\mathcal H} \, \|v\|_{\mathcal H}.
\end{aligned}
$$
\subsection{Well-posedness}
Consider the abstract equation on $H$ of the form
\begin{equation}\label{shell}
\d u(t)=\left[-\nu Au(t)+B(u(t),u(t))\right]\d t+Q \d W(t),\quad t\geq 0
\end{equation}
with the initial condition $u(0)=\xi\in H$,
where $Q=(q_{i, j})_{i, j\in\mathbb N}$ is some matrix with $\Tr (Q Q^{*})<\infty$ and $W(t)=(w_{n}(t))_{n\ge 1}$ is a cylindrical Wiener noise on some filtered space $(\Omega, \mathcal F, (\mathcal F_{t})_{t\ge 0}, \mathbb P)$.
\begin{definition}\label{def_solution}
A stochastic process $u(t,\omega)$ is a generalized solution in $[0,T]$ of the system \eqref{shell} if
$$u(\cdot,\omega)\in C([0,T];H)\cap L^{2}(0,T;\mathcal{H})$$
for $\mathbb P$-a.e. $\omega\in \Omega$, $u$ is progressively measurable in these topologies and equation \eqref{shell} is satisfied in the integral sense
$$
\begin{aligned}
&  ( u(t),\varphi)+\int_{0}^{t}\nu(
u(s),A\varphi)\d s+\int_{0}^{t}\left(B\left(
u(s),\varphi\right)  ,u(s)\right) \d s\nonumber\\
&  =\left(\xi,\varphi\right)+\left( Q W(t),\varphi
\right)%
\end{aligned}
$$
for all $t\in [0,T]$ and $\varphi\in \D (A)$.
\end{definition}

\begin{theorem}\label{thm_solution_shell}
Let us assume that the initial condition $\xi$ is an $\mathcal{F}_{0}$-random variable with values in $H$.
Then there exists a unique solution $(u(t))_{t\ge 0}$ to equation \eqref{shell}.
Moreover, if $\mathbb E |\xi|^{2}<+\infty$, then
\begin{equation}\label{mainev}
\mathbb E |u(t)|^{2}+\int_{0}^{t}2\nu\mathbb E\|u(s)\|^{2}\d s= \mathbb E |\xi|^{2} + \Tr (Q Q^{*}) t
\end{equation}
for any $t\ge 0$.
\end{theorem}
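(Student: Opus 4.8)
The plan is to run the classical Galerkin scheme, exploiting the two structural facts recorded above: the cancellation $(B(v,u),u)=0$ for $u\in H$, $v\in V$, so that the bilinear term drops out of the $H$--energy balance, and the trilinear estimate \eqref{trili}, which controls $B$ as a continuous map $H\times H\to V'$. Since only finitely many modes are forced ($\sigma_n=0$ for $n\ge n_0$), it is convenient first to subtract the stochastic convolution $z(t)=\int_0^t e^{-\nu(t-s)A}Q\,\d W(s)$; this $z$ is a finite family of Ornstein--Uhlenbeck processes sitting inside $H$, hence almost surely $z\in C([0,\infty);\mathcal H_\alpha)$ for every $\alpha>0$. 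Putting $v=u-z$ turns \eqref{shell} into the \emph{random} evolution equation $v'=-\nu Av+B(v+z,v+z)$, $v(0)=\xi$, which is to be solved pathwise.

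First I would introduce the Galerkin projections $P_N$ onto $\mathrm{span}\{e_1,\dots,e_N\}$ and solve, for each fixed $\omega$, the finite--dimensional ODE $\dot v^N=-\nu Av^N+P_NB(v^N+z^N,v^N+z^N)$, $v^N(0)=P_N\xi$, with $z^N=P_Nz$; its coefficients are polynomial, hence locally Lipschitz, so a local solution exists. Testing with $v^N$ and using $(B(a,a),a)=0$ together with $(B(z^N,v^N),v^N)=0$, the surviving terms $(B(v^N,z^N),v^N)$ and $(B(z^N,z^N),v^N)$ are bounded via \eqref{trili} and absorbed into $\nu\|v^N\|^2$; Gronwall's lemma then yields, pathwise, a bound for $v^N$ in $L^\infty(0,T;H)\cap L^2(0,T;V)$ depending only on $|\xi|$ and $\sup_{[0,T]}\|z\|$, which rules out blow--up and makes the solutions global. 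From the equation and \eqref{trili} one also bounds $\dot v^N$ uniformly in $L^2(0,T;V')$. Since the embedding $V\hookrightarrow H$ is compact, the Aubin--Lions lemma gives a subsequence with $v^N\to v$ strongly in $L^2(0,T;H)$, weakly in $L^2(0,T;V)$ and weak--$*$ in $L^\infty(0,T;H)$; the strong convergence, together with the bilinearity of $B$ and its continuity into $V'$, lets one identify the limit of the nonlinear term, so that $v$ solves the limit equation. Measurability in $\omega$ is inherited because the construction depends measurably on $z$ (and the uniqueness proved below shows the limit is independent of the subsequence). Hence $u:=v+z$ is a solution in the sense of Definition \ref{def_solution}, in fact with $u\in C([0,T];H)\cap L^2(0,T;V)$.

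To obtain the energy identity \eqref{mainev} I would apply the infinite--dimensional It\^o formula to $t\mapsto|u(t)|^2$. This is legitimate since $u\in L^2(0,T;V)$ and the drift $-\nu Au+B(u,u)$ lies in $L^2(0,T;V')$ by \eqref{trili}, so the It\^o lemma for such processes gives $\d|u(t)|^2=\big(-2\nu\|u(t)\|^2+\Tr(QQ^*)\big)\d t+2(u(t),Q\,\d W(t))$, using $\langle B(u,u),u\rangle_{V',V}=0$ and recognizing $\Tr(QQ^*)$ as the It\^o correction; integrating and taking expectations, after a localization and Burkholder--Davis--Gundy argument that kills the martingale term and uses $\mathbb E|\xi|^2<\infty$ to secure integrability, gives precisely \eqref{mainev}. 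Uniqueness is pathwise: if $u_1,u_2$ solve \eqref{shell} with the same $W$, then $w=u_1-u_2$ satisfies the deterministic equation $\dot w=-\nu Aw+B(w,u_1)+B(u_2,w)$, $w(0)=0$. Testing with $w$, the term $(B(u_2,w),w)$ vanishes by cancellation, while $(B(w,u_1),w)=-(B(w,w),u_1)$ is estimated by $C|w|\,\|w\|\,|u_1|\le\nu\|w\|^2+\tfrac{C^2}{4\nu}|u_1|^2|w|^2$ via \eqref{trili} and Young's inequality. Hence $\tfrac{\d}{\d t}|w|^2\le\tfrac{C^2}{2\nu}|u_1(t)|^2|w(t)|^2$, and since $u_1\in C([0,T];H)$ has $\int_0^T|u_1(t)|^2\d t<\infty$ almost surely, Gronwall's lemma forces $w\equiv0$.

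The step I expect to be the main obstacle is the passage to the limit in the Galerkin scheme together with the a priori estimates that support it: one needs enough compactness --- strong $L^2(0,T;H)$ convergence --- to identify the weak limit of the bilinear term, and one must verify that the limit solution (not merely the approximants) has the regularity required to justify the It\^o formula yielding the exact balance \eqref{mainev}. None of these steps is deep, but together they carry essentially all of the technical work; the algebraic cancellation and the trilinear bound \eqref{trili} are exactly what make them go through.
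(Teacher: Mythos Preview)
Your proposal is correct and follows essentially the same route as the paper: subtract the Ornstein--Uhlenbeck process $z$ (which, because only finitely many modes are forced, is as smooth as you like), solve the resulting random PDE for $v=u-z$ pathwise by Galerkin approximation using the cancellation $(B(\cdot,w),w)=0$ and the trilinear bound \eqref{trili}, and then read off \eqref{mainev} from It\^o's formula applied to $|u|^2$. The paper in fact omits most of the details you supply (referring to \cite{Hakima}), and treats the continuous-dependence/uniqueness estimate as a separate statement (Theorem~\ref{continuous_dependence}), but the ingredients and structure are the same.
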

\begin{proof}
We will prove well--posedness using a pathwise argument (for similar results see \cite{Hakima} and the references
therein). Let us introduce the Ornstein-Uhlenbeck process solution of

\begin{equation}\label{OU_process}
\left\{\begin{array}{l}
\d z(t)+\nu Az(t)\d t=Q\d W,\\
z(0)=0.\end{array}
\right.
\end{equation}
The above equation has a unique progressively measurable solution such that $\mathbb P$-a.s.
$$z\in C([0,T]; \mathcal{H})$$
(for more details see \cite{DZ}).
Set $v=u-z$. Then for $\mathbb P$-a.e. $\omega\in \Omega$

\begin{eqnarray}\label{deterministic_shell}
\left\{\begin{array}{l}
\frac{\d }{\d t} v(t)+\nu Av(t)-B(v(t)+z(t),v(t)+z(t))=0,\\
v(0)=\xi,
\end{array}
\right.
\end{eqnarray}
is a deterministic system. The existence and uniqueness of global weak solutions $v$ follow from
the Galerkin approximation procedure and then passing to the limit using the appropriate compactness theorems.
We omit the details which can be found in \cite{Hakima} and the references therein. Instead, we present the formal
computations which lead to the basic a priori estimates, this is in order to stress the role played by $z$.
Using equation \eqref{deterministic_shell} and various properties of the nonlinear operator $B$, we have
\begin{eqnarray*}
\frac{1}{2}\frac{\d}{\d t}|v(t)|^{2}+\nu\|v(t)\|^{2}&\leq& |(B(v(t)+z(t), z(t)),v(t))|\\
 &\leq& C\|v(t)\||v(t)+z(t)||z(t)|\\
 &\leq& \frac{\nu}{2}\|v(t)\|^{2}+C(\nu)\left(|v(t)|^{2}|z(t)|^{2}+|z(t)|^{4}\right).
 \nonumber
\end{eqnarray*}
Using Gronwall's Lemma and the fact that $\|z\|_{C([0,T]; \mathcal{H})}\leq C(\omega)$, we have
$$
\sup_{0\leq t\leq T}|v(t)|^{2}\leq C(|\xi|, T, C(\omega)).
$$
Again, using the above inequality in the previous estimate, we obtain that
$$
\int_{0}^{T}\|v(s)\|^{2}\d s\leq C(|\xi|, T, C(\omega)).
$$
Then, by classical arguments, see \cite{Temam}, $v\in C([0,T]; H)\cap L^{2}(0,T; \D (A^{1/2}))$.
Therefore $u=v+z\in C([0,T]; H)\cap L^{2}(0,T; \D (A^{1/4}))$ $\mathbb P$-a.s.

To finish the proof observe that condition (\ref{mainev}) follows from
It$\hat{\text{o}}$'s formula.
\end{proof}
The uniqueness of solutions is established in the following theorem.

\begin{theorem}\label{continuous_dependence}
Let $\left(  u^{\left(  1\right)  }(t)\right)  _{t\geq0}%
$, $\left(  u^{\left(  2\right)  }(t)\right)  _{t\geq0}$, be two continuous
adapted solutions of $(\ref{shell})$ in $H$, with the initial conditions
$u_{0}^{\left(  1\right)  }$ and $u_{0}^{\left(  2\right)  }$ as above. Then
there is a constant $C(\nu)>0$, depending only on $\nu$, such that $\mathbb P$-a.s.

\begin{equation*}
\left|u^{1}(t)-u^{2}(t)\right|^{2}
\leq {\rm e}^{C(\nu)\int_{0}^{t}\left|  u^{1}(s)\right|^{2}ds}
\left|u_{0}^{1}-u_{0}^{2}\right|^{2}\quad t\geq 0.
\end{equation*}
\end{theorem}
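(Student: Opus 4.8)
The plan is to pass to the difference process and run a pathwise Gronwall argument; the only delicate point is to arrange the nonlinear terms so that only $u^{(1)}$ (and not $u^{(2)}$) survives in the final bound.

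First I would set $w:=u^{(1)}-u^{(2)}$. Subtracting the two copies of \eqref{shell} (driven by the same noise), the stochastic term $Q\,\d W$ cancels, so $w$ solves, $\mathbb P$-a.s., the \emph{deterministic} equation
$$
\frac{\d}{\d t}w(t)+\nu A w(t)=B(u^{(1)}(t),u^{(1)}(t))-B(u^{(2)}(t),u^{(2)}(t)),\qquad w(0)=u_0^{(1)}-u_0^{(2)},
$$
understood in the Gelfand-triple sense inherited from Definition~\ref{def_solution}. By bilinearity of $B$ I would decompose the right-hand side as $B(u^{(1)},u^{(1)})-B(u^{(2)},u^{(2)})=B(u^{(1)},w)+B(w,u^{(2)})$, test the equation against $w$ (the energy equality being justified exactly as in the proof of Theorem~\ref{thm_solution_shell}, via the Galerkin scheme and the a priori bounds), and obtain
$$
\tfrac12\frac{\d}{\d t}|w(t)|^2+\nu\|w(t)\|^2=(B(u^{(1)},w),w)+(B(w,u^{(2)}),w).
$$

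The key observation is that in the last term one substitutes $u^{(2)}=u^{(1)}-w$; bilinearity gives $(B(w,u^{(2)}),w)=(B(w,u^{(1)}),w)-(B(w,w),w)$, and $(B(w,w),w)=0$ by the cancellation property $(B(v,h),h)=0$ for $v\in V$, $h\in H$ recorded in Section~3. Hence the whole nonlinear contribution equals $(B(u^{(1)},w),w)+(B(w,u^{(1)}),w)$, which involves $u^{(1)}$ only. The trilinear estimate \eqref{trili} bounds each of these two terms, so that $|(B(u^{(1)},w),w)|+|(B(w,u^{(1)}),w)|\le 2C|u^{(1)}|\,|w|\,\|w\|$, and Young's inequality absorbs the dissipation: $2C|u^{(1)}|\,|w|\,\|w\|\le\nu\|w\|^2+\tfrac{C^2}{\nu}|u^{(1)}|^2|w|^2$. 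This yields
$$
\frac{\d}{\d t}|w(t)|^2\le\frac{2C^2}{\nu}\,|u^{(1)}(t)|^2\,|w(t)|^2,
$$
and Gronwall's lemma (legitimate since $t\mapsto|u^{(1)}(t)|^2$ is even bounded on $[0,T]$, $u^{(1)}$ being a continuous $H$-valued solution) produces the claimed inequality with $C(\nu)=2C^2/\nu$, $C$ the constant in \eqref{trili}.

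I expect the only genuine obstacle to be the rigorous justification of the energy identity for $w$: a priori $w$ lies only in $L^2(0,T;\mathcal H)$, not in $L^2(0,T;V)$, so pairing with $w$ and integrating against $A$ needs either the parabolic smoothing of $\frac{\d}{\d t}+\nu A$ or, more cheaply, performing the computation on the Galerkin level and passing to the limit, exactly as in Theorem~\ref{thm_solution_shell}. The algebraic manipulation removing $u^{(2)}$ from the estimate (the substitution $u^{(2)}=u^{(1)}-w$ together with $(B(w,w),w)=0$) is precisely what makes the bound depend on $u^{(1)}$ alone, as stated; a naive estimate would only give a constant depending on $\int_0^t(|u^{(1)}(s)|^2+|u^{(2)}(s)|^2)\,\d s$.
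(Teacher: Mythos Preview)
Your proof is correct and follows the same strategy as the paper's: subtract the two equations, use the cancellation property of $B$ to reduce the nonlinear contribution to terms involving only $u^{(1)}$, then apply Young's inequality and Gronwall. The paper's computation is marginally more direct: it decomposes $B(u^{(1)},u^{(1)})-B(u^{(2)},u^{(2)})=B(w,u^{(1)})+B(u^{(2)},w)$ and kills the second summand immediately via $(B(u^{(2)},w),w)=0$, so only $(B(w,u^{(1)}),w)$ remains---note that in your version the term $(B(u^{(1)},w),w)$ also vanishes by the same cancellation, which you could have used to save a factor in $C(\nu)$.
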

\begin{proof}
Let us put $u(t)=u^{1}(t)-u^{2}(t)$.
Then $u$ is the solution of the following equation

$$
\d u+\nu Au \d t-\left( B(u^{1},u^{1})-B(u^{2},u^{2})\right) \d t=0.
$$
Using again the properties of operator $B$, we obtain

\begin{eqnarray*}
\frac{\d}{\d t}|u|^{2}+\nu\|u\|^{2}&\leq& |(B(u,u^{1}),u)|\\
&\leq&\frac{\nu}{2}\|u\|^{2}+C(\nu)|u|^{2}|u^{1}|^{2}.
\end{eqnarray*}
Hence, by the Gronwall lemma, we obtain that

\begin{equation*}
|u(t)|^{2}\leq |u(0)|^{2}{\rm e}
^{C(\nu)\left(\int_{0}^{T}|u^{1}(s)|^{2}ds\right)},
\end{equation*}
which finishes the proof.
\end{proof}

\section{Stability of the model}

Let a diagonal matrix $Q=(q_{i, j})_{i, j\in\mathbb N}$ be such that there is $n_0\in\mathbb N$ and $q_{n, n}=0$ for $n\ge n_0$. Consider the equation on $H$ of the form
\begin{equation}\label{Eq1}
\d u (t)=[-\nu Au(t)+B(u(t), u(t))]\d t +Q\d W(t) \quad t\ge 0,
\end{equation}
where $(W(t))_{t\ge 0}$ is a certain cylindrical Wiener process on a filtered space $(\Omega, \mathcal F, (\mathcal F_{t})_{t\ge 0}, \mathbb P)$.
\vskip5mm

By Theorem  \ref{thm_solution_shell} for every $x\in H$ there is a unique continuous solution $(u^{x}(t))_{t\ge 0}$ in $H$, { hence the transition semigroup is well defined. From Theorem \ref{continuous_dependence}} we obtain that the solution satisfies the Feller property, i.e. for any $t\ge 0$ if $x_{n}\to x$ in $H$, then $\mathbb E f(u^{x_{n}}(t))\to \mathbb E f(u^{x}(t))$ for any $f\in C_{b}(H)$.
Set
$$
P_{t}f(x)=\mathbb E f(u^{x}(t))\quad\text{for any $f\in C_{b}(H)$}.
$$
Obviously $(P_{t})_{t\ge 0}$ is stochastically continuous.
First note that $D P_{t} f(x)[v]$, the value  of the Frechet derivative $D P_{t} f(x)$ at $v\in H$, is
equal to $ \E \,\left\{ D{f}(u^x(t)) [U(t)]\right\}$, where
$U(t):=\partial u^x(t)[v]$ and
$$
\partial u^x(t)[v] := \lim_{\eta \downarrow 0}
\frac{1}{\eta} \left( u^{x+\eta v}(t)- u^x(t)\right)
$$
and the limit is in $L^2(\Omega,\cF,\P; H)$ (see \cite{KPS10} also \cite{HM06}). The process
$U=\left(U(t)\right)_{t\ge0}$ satisfies the linear evolution
equation
\begin{equation}
\label{ET27}
\begin{aligned}
\frac{\d U(t)}{\d t}&= -\nu A U(t)+ B(u^x(t), U(t)) + B(U(t), u^x(t)),\\
U(0)&=v.
\end{aligned}
\end{equation}

 Suppose that $\cX$ is a certain Hilbert space and $\Phi\colon
H\to \cX$ a Borel measurable function. Given an
$(\cF_t)_{t\ge0}$-adapted process $g\colon [0,\infty) \times \Omega \to
H$ satisfying $ \E \int_0^t |g(s)|^2\d s <\infty $ for each
$t\ge0$ we denote by $\cD_g\Phi(u^x(t))$  the Malliavin derivative
of $\Phi(u^x(t))$ in the direction of $g$; that is the
$L^2(\Omega,\cF,\P;\cX)$-limit, if exists, of
$$
\cD_g\Phi(u^x(t)):=\lim_{\eta \downarrow
0}\frac{1}{\eta} \left[ \Phi(u^x_{\eta g}(t)) -\Phi(
u^x(t))\right],
$$
where $u^x _{g}(t)$, $t\ge 0$, solves the equation
$$
\d u^x_{g}(t) = \left[-\nu Au^x_{ g}(t)+ B(u^x_{g}(t), u^x_{g}(t))\right] \d t +
Q \left( \d W(t) + g(t)\d t\right),\qquad u^x_{g}(0)= x.
$$
In particular, one can easily show that when $\cX=H$ and $\Phi=I$, where $I$ is the identity operator,
the Malliavin derivative of $u^x(t)$ exists and
 the process $D(t):= \cD_gu^x(t)$, $t\ge 0$, solves the linear equation
\begin{equation}\label{ET28}
\begin{aligned}
\frac{\d D}{\d t}(t)&=-\nu A D(t) + B(u^x(t),D(t))
+B(D(t),u^x(t)) + Q g(t),\\
&\\
D(0)&=0.
\end{aligned}
\end{equation}

 Directly from the definition of the Malliavin
derivative we conclude the \emph{chain rule:} suppose that $\Phi\in
C^1_b(H; \cX)$ then
$$
\cD_g\Phi( u^x(t))=D\Phi(u^x(t))[D(t)].
$$
(Here $C^1_b(H; \cX)$ denotes the space of all bounded continuous functions $\Phi :H\to\cX$ with continuous and bounded first derivative with the natural norm. In the case when $\cX=\mathbb R$ we simply write $C^1_b(H)$.)
In addition, the  \emph{integration by parts formula} holds, see
Lemma 1.2.1, p. 25 of \cite{Nualart}. Indeed, suppose that $\Phi\in
C^1_b(H)$. Then
\begin{equation}
\label{083003} \E[\cD_g\Phi( u^x(t))]=
\E\left[\Phi(u^x(t))\int_0^t (g(s), \d
W(s))\right].
 \end{equation}

\begin{lemma}\label{lemma_1}
Let $\eta\in (0, \nu/(2\max q^2_{i, i})]$. Then we have
$$
\mathbb E(\exp(\eta |u^x(t)|^2+\eta\nu\int_0^t\|u^x(s)\|^2\d s))\le 2\exp(\eta  (\Tr Q^2) t+\eta |x|^2).
$$
\end{lemma}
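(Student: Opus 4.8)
The plan is to recognise the quantity inside the exponential as an It\^o energy functional plus the logarithm of a stochastic exponential, and then to use the supermartingale property. First I would apply It\^o's formula to $t\mapsto|u^x(t)|^2$ along the solution of \eqref{Eq1}, using $(Au,u)=\|u\|^2$ together with the cancellation $(B(u,u),u)=0$ (which is the identity $(B(v,u),u)=0$ of Section~3 with $v=u$, licit for a.e.\ $s$ since $\int_0^t\|u^x(s)\|^2\,\d s<\infty$ almost surely by Theorem~\ref{thm_solution_shell} and \eqref{mainev}). This yields the pathwise energy identity
$$
|u^x(t)|^2+2\nu\int_0^t\|u^x(s)\|^2\,\d s=|x|^2+\Tr(Q^2)\,t+M_t,\qquad M_t:=2\int_0^t\big(u^x(s),Q\,\d W(s)\big),
$$
where $\Tr(QQ^*)=\Tr(Q^2)$ because $Q$ is a real diagonal matrix. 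The continuous local martingale $M$ has quadratic variation $\langle M\rangle_t=4\int_0^t|Qu^x(s)|^2\,\d s$, and $|Qu^x(s)|^2\le(\max_i q_{i,i}^2)|u^x(s)|^2\le(\max_i q_{i,i}^2)\|u^x(s)\|^2$, the last inequality because $k_n\ge k_0>1$.

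Multiplying the identity by $\eta$ and regrouping the right-hand side as the log of a stochastic exponential plus a remainder,
$$
\eta|u^x(t)|^2+\eta\nu\int_0^t\|u^x(s)\|^2\,\d s-\eta|x|^2-\eta\Tr(Q^2)\,t=\Big(\eta M_t-2\eta^2\!\int_0^t|Qu^x(s)|^2\,\d s\Big)+\Big(2\eta^2\!\int_0^t|Qu^x(s)|^2\,\d s-\eta\nu\!\int_0^t\|u^x(s)\|^2\,\d s\Big).
$$
The first bracket equals $\log\mathcal E_t$, where $\mathcal E_t:=\exp\big(\eta M_t-\frac12\langle\eta M\rangle_t\big)$ is the stochastic exponential of $\eta M$; the second bracket is $\le\big(2\eta^2\max_i q_{i,i}^2-\eta\nu\big)\int_0^t\|u^x(s)\|^2\,\d s\le 0$ precisely by the hypothesis $\eta\le\nu/(2\max_i q_{i,i}^2)$. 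Hence
$$
\exp\Big(\eta|u^x(t)|^2+\eta\nu\int_0^t\|u^x(s)\|^2\,\d s\Big)\le\exp\big(\eta|x|^2+\eta\Tr(Q^2)\,t\big)\,\mathcal E_t,
$$
and since $\mathcal E$ is a nonnegative local martingale, hence a supermartingale with $\mathbb E\,\mathcal E_t\le1$, taking expectations gives the asserted bound (indeed with $1$ in place of $2$).

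The one step that needs genuine care is the last one, because $\mathcal E$ is a priori only a local martingale. I would localise with the stopping times $\tau_N:=\inf\{s\ge0:|u^x(s)|\ge N\}$: on $[0,t\wedge\tau_N]$ the path $|u^x|$ is bounded by $N$, so $\langle\eta M\rangle_{\cdot\wedge\tau_N}$ is bounded and $(\mathcal E_{s\wedge\tau_N})_{s\le t}$ is a true martingale (for instance by Novikov's criterion), whence $\mathbb E\,\mathcal E_{t\wedge\tau_N}=1$. Evaluating the displayed pointwise bound at time $t\wedge\tau_N$ and taking expectations gives $\mathbb E\exp\big(\eta|u^x(t\wedge\tau_N)|^2+\eta\nu\int_0^{t\wedge\tau_N}\|u^x(s)\|^2\,\d s\big)\le\exp(\eta|x|^2+\eta\Tr(Q^2)\,t)$; since $\tau_N\uparrow\infty$ a.s.\ and $t\mapsto|u^x(t)|^2+\int_0^t\|u^x(s)\|^2\,\d s$ has continuous paths, Fatou's lemma (along $t\wedge\tau_N\uparrow t$) finishes the proof. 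The remaining points — the pathwise It\^o identity, which needs only $\int_0^t\|u^x(s)\|^2\,\d s<\infty$ a.s.\ (supplied by \eqref{mainev}), and the real/complex bookkeeping in computing $\langle M\rangle$ — are routine.
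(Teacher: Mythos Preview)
Your argument is correct and, in fact, yields the inequality with the constant $1$ in place of $2$. The underlying ingredients are the same as in the paper --- It\^o's formula for $|u^x(t)|^2$, the cancellation $(B(u,u),u)=0$, and the comparison $|Qu|^2\le(\max_i q_{i,i}^2)\|u\|^2$ --- but the way the martingale part is handled differs. You bound the right-hand side directly by the Dol\'eans--Dade exponential $\mathcal E_t=\exp(\eta M_t-\tfrac12\langle\eta M\rangle_t)$ and use that a nonnegative local martingale is a supermartingale (made rigorous by your localisation with $\tau_N$ and Fatou), so $\mathbb E\,\mathcal E_t\le1$. The paper instead passes through a tail estimate: it sets $N(t)=M(t)-\eta\nu\int_0^t\|u^x(s)\|^2\,\d s$, observes $N(t)\le M(t)-(\alpha/\eta)\langle M\rangle(t)$ with $\alpha=\nu/\max_i q_{i,i}^2$, invokes the exponential martingale inequality to get $\mathbb P(N(t)\ge K)\le e^{-(\alpha/\eta)K}\le e^{-2K}$, and then converts this into $\mathbb E\,e^{N(t)}\le 2$ via the elementary fact that $\mathbb P(X\ge C)\le C^{-2}$ implies $\mathbb E X\le 2$. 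Both routes rest on the same exponential-martingale mechanism; yours is the more direct one and gives the sharper constant, while the paper's tail-bound route has the advantage of also delivering deviation estimates if one needed them.
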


\begin{proof} Fix  $\eta\in (0, \nu/(2\max q^2_{i, i})]$. Let $M(t)=\eta\int_{0}^{t}(u^{x}(s), Q\d W(s))$ and let $N(t)=M(t)-\eta\nu\int_{0}^{t}\|u^{x}(s)\|^2\d s.$
Set  $\alpha=\nu/\max q^2_{i, i}$. Then we have $\nu\|u^{x}(s)\|^{2}\ge\alpha |Qu^{x}(s)|^{2}.$ Now observe that
$N(t)\le M(t)-(\alpha/\eta)\langle M\rangle(t)$, where $\langle M\rangle(t)$ denotes the quadratic variation of the continous $L^{2}$--martingale $M$ with the filtration generated by the noise. Hence by a standard variation of the  Kolmogorov--Doob martingale inequality (see \cite{RY}) we have
$$
\mathbb P(N(t)\ge K)\le\exp (-\alpha K/\eta)
$$
and consequently we obtain
$$
\mathbb P(\exp N(t)\ge \exp K)\le\exp (-\alpha K/\eta)\le\exp(-2K)
$$
for any $K>0$. An easy observation that if some positive random variable, say $X$, satisfies the condition $\mathbb P(X\ge C)\le C^{-2}$ for every $C>0$, then $\mathbb E X\le 2$ gives
$$
\mathbb E (\exp(\eta|u^{x}(t)|^{2}+\eta\nu\int_{0}^{t}\|u^{x}(s)\|^2\d s-\eta(\Tr Q^{2})t-\eta|x|^{2}))\le2,
$$
by It$\hat{\text{o}}$'s formula.
This completes the proof.
\end{proof}

 The crucial role in our consideration is played by the following lemma. The idea of its proof is taken from \cite{HM06}.

\begin{lemma}\label{E-property} Let $(P_{t})_{t\ge 0}$  correspond to problem $(\ref{Eq1})$. If $Q$ satisfies the condition:
\begin{equation}\label{Eq3.04.11}
q_{1, 1}, \ldots, q_{N_*, N_*}\neq 0 \quad\text{for $N_*> \log_2(2C^2\max q^2_{i, i} /\nu^3 +\Tr Q^2/(2\max q^2_{i, i} ))/2$},
\end{equation}
where $C>0$ is given by (\ref{nee2}), then for any $f\in C^1_b(H)$ and $R>0$ there exists a constant $C_{0}>0$ such that
\begin{equation}\label{E_10.11}
\sup_{t\ge0}\sup_{|x|\le R}\sup_{|v|\le 1}|D P_{t} f(x)[v]|\le C_{0} \|f\|_{C_b^1({H})}.
\end{equation}
\end{lemma}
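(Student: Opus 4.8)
The plan is to decompose the derivative $U(t)=\partial u^x(t)[v]$ into a piece driven by a well-chosen Malliavin direction $g$ and a residual piece $\rho(t):=U(t)-D(t)$, where $D(t)=\cD_g u^x(t)$ solves \eqref{ET28}. Subtracting \eqref{ET28} from \eqref{ET27}, the residual obeys $\dot\rho = -\nu A\rho + B(u^x,\rho)+B(\rho,u^x) - Qg$ with $\rho(0)=v$. The idea, following \cite{HM06}, is to choose $g$ adapted so that it cancels the low modes of $U$ while leaving us in control: concretely, pick $g(t)$ supported on the first $N_*$ coordinates (those with $q_{i,i}\neq0$, by \eqref{Eq3.04.11}) so that $Qg(t)$ reproduces $-\nu A\rho(t)$ on the invertible block together with whatever is needed to force exponential decay of $\rho$ in the low modes, while on the high modes $n>N_*$ the equation for $\rho$ has no forcing and the dissipation rate $\nu k_n^2\ge\nu k_0^2 2^{2N_*}$ is large. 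The threshold on $N_*$ in \eqref{Eq3.04.11} is exactly what makes $\nu k_{N_*}^2$ dominate the constant $C$ from the bilinear bound \eqref{nee2} together with the exponential moment weight $\eta=\nu/(2\max q_{i,i}^2)$ from Lemma~\ref{lemma_1}; this is the source of the pathwise estimate
\[
|\rho(t)|^2 \;\le\; |v|^2\exp\!\Big(-c\!\int_0^t\!\big(1-|u^x(s)|^2/\text{(something)}\big)\,\d s\Big)
\]
— more honestly, an estimate of the form $|\rho(t)|^2\le |v|^2\exp(-\kappa t + c'\int_0^t|u^x(s)|^2\,\d s)$ for suitable $\kappa,c'>0$.

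Next I would estimate $g$ itself: from its defining relation, $|g(t)|\le C(\nu,Q)\,(\,|\rho(t)| + \|u^x(t)\|\,|\rho(t)|\,)$ using \eqref{nee2}, so that $\int_0^t|g(s)|^2\,\d s$ is bounded, via Cauchy--Schwarz, by $|v|^2$ times a quantity of the form $\int_0^t e^{-\kappa s + c'\int_0^s|u^x|^2}(1+\|u^x(s)\|^2)\,\d s$. Here is where Lemma~\ref{lemma_1} does the work: it controls $\mathbb E\exp(\eta|u^x(t)|^2+\eta\nu\int_0^t\|u^x(s)\|^2\d s)$, hence, after choosing the constants in the decomposition so that $c'$ and $\kappa$ sit below the admissible $\eta$-window, an application of Hölder's inequality (splitting the $e^{-\kappa s}$ decay against the exponential-moment growth) gives $\mathbb E\int_0^\infty|g(s)|^2\,\d s\le C(\nu,Q,R)\,|v|^2$, uniformly in $t$ and in $|x|\le R$. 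Then for $f\in C^1_b(H)$,
\[
DP_tf(x)[v] \;=\; \mathbb E\big\{Df(u^x(t))[U(t)]\big\}
\;=\; \mathbb E\big\{Df(u^x(t))[D(t)]\big\} + \mathbb E\big\{Df(u^x(t))[\rho(t)]\big\}.
\]
The first term is $\mathbb E[\cD_g f(u^x(t))]$ by the chain rule, which by the integration-by-parts formula \eqref{083003} equals $\mathbb E[f(u^x(t))\int_0^t(g(s),\d W(s))]$, bounded by $\|f\|_\infty\,(\mathbb E\int_0^t|g(s)|^2\d s)^{1/2}\le C(\nu,Q,R)\|f\|_\infty|v|$. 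The second term is bounded by $\|Df\|_\infty\,\mathbb E|\rho(t)|\le \|Df\|_\infty\,(\mathbb E|\rho(t)|^2)^{1/2}$, and $\mathbb E|\rho(t)|^2\le|v|^2\mathbb E\,e^{-\kappa t+c'\int_0^t|u^x|^2}\le C(\nu,Q,R)|v|^2$ again by Lemma~\ref{lemma_1}. Combining and taking suprema over $|v|\le1$, $|x|\le R$, $t\ge0$ yields \eqref{E_10.11} with $C_0=C_0(\nu,Q,R)$.

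The main obstacle is making the first paragraph rigorous: one must choose the control $g$ and the decay rate $\kappa$ so that simultaneously (i) $\rho$ genuinely decays, which forces $\kappa$ to be tied to the spectral gap $\nu k_{N_*}^2$ minus a bilinear loss proportional to $C$ and to $|u^x|$, and (ii) the resulting exponential weight $c'\int_0^t|u^x|^2\,\d s$ stays within the integrability window $\eta\le\nu/(2\max q_{i,i}^2)$ of Lemma~\ref{lemma_1}. Reconciling these two is precisely what the hypothesis \eqref{Eq3.04.11} on $N_*$ is engineered to permit — one checks that the stated logarithm is exactly the break-even point — but verifying the energy inequality for $\rho$ with the correct constants (using $(B(v,u),u)=0$, the trilinear estimate \eqref{trili}, and Young's inequality to absorb cross terms into $\tfrac{\nu}{2}\|\rho\|^2$) is the delicate computational heart of the argument. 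Everything after the pathwise bound on $\rho$ and $g$ is a routine application of the already-established Lemma~\ref{lemma_1} and formula \eqref{083003}.
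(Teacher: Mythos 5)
Your overall architecture coincides with the paper's: decompose $\partial u^x(t)[v]$ into the Malliavin derivative $\cD_g u^x(t)$ plus a residual $\rho_t$, bound the first term through the integration-by-parts formula \eqref{083003} by $\|f\|_{\infty}\bigl(\E\int_0^t|g(s)|^2\d s\bigr)^{1/2}$ and the second by $\|f\|_{C^1_b(H)}\bigl(\E|\rho_t|^2\bigr)^{1/2}$, and then close both estimates using Lemma \ref{lemma_1}. This reproduces Step I of the paper's proof, and you correctly locate where the real work lies.

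However, the part you defer is the core of the proof, and as sketched it does not yet yield the lemma as stated. First, the control is not pinned down: the paper prescribes the low-mode dynamics explicitly, $\dot\xi_i=-\xi_i/\bigl(2\sqrt{\sum_{j\le N_*}\xi_j^2}\bigr)$ for $i\le N_*$, so that these modes vanish identically for $t\ge 2$ (since $|v|\le 1$), and then solves for $g_i=q_{i,i}^{-1}(\cdots)$ on the invertible block with $g_i=0$ for $i>N_*$; note that $Qg$ must also cancel the low-mode components of $B(u^x,\xi)+B(\xi,u^x)$, not only the dissipative term, which your description leaves implicit. Second, your claimed pathwise bound $|\rho(t)|^2\le|v|^2\exp\bigl(-\kappa t+c'\int_0^t|u^x(s)|^2\d s\bigr)$, with the $H$-norm in the exponent, is not what the available estimates deliver in a form compatible with \eqref{Eq3.04.11}: the energy inequality for the high-mode part $\zeta$ obtained from \eqref{nee2} reads $\frac{\d}{\d t}|\zeta|^2\le-\nu k_{N_*}^2|\zeta|^2+2C\|u^x\|\,|\zeta|^2+2\tilde C\|u^x\|\,|\zeta|$, with the $V$-norm, and the decisive step is to Young-split $2C\|u^x\|$ against the weight $\nu/(2\max q_{i,i}^2)$ so that Lemma \ref{lemma_1} absorbs $\exp\bigl(\nu/(2\max q_{i,i}^2)\int_0^t\|u^x(s)\|^2\d s\bigr)$ at the cost of the constant $2C^2\max q_{i,i}^2/\nu^2+\nu\Tr Q^2/(2\max q_{i,i}^2)$, which is exactly the combination the threshold on $N_*$ in \eqref{Eq3.04.11} is designed to beat. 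Your remark that ``one checks that the stated logarithm is exactly the break-even point'' is therefore the entire content of Steps II--III of the paper's argument; with your $H$-norm variant the constants come out differently (and involve $k_0$), so the stated threshold is not verified by your sketch. The same computation is also needed for the control term, via $|g(s)|\le\tilde C\|u^x(s)\|\,|\zeta(s)|$ for $s\ge2$, to conclude $\E\int_2^\infty|g(s)|^2\d s<\infty$ uniformly over $|x|\le R$, $|v|\le1$.
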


\begin{proof}  Fix $N_*> \log_2(2C^2\max q^2_{i, i} /\nu^3 +\Tr Q^2/(2\max q^2_{i, i} ))/2$. The proof will be split into three steps.

{\bf Step I:} Let $g: [0, \infty)\times\Omega\to H$ be a measurable function such that  $ \E \int_0^t |g(s)|^2\d s <\infty $ for any
$t\ge0$.
Let
$\om_t(x):={\cD}_{g}u^x(t)$ and $ \rho_t(v,x):=\partial u^x(t)[v]-
{\cD}_{g}u^x(t)$.  Then,
$$
\begin{aligned}
D P_{t} f(x)[v]&\ = \ \  \E\left\{\,Df(u^x(t))[\om_t(x)]\right\}
+\E\,\left\{ D
f(u^x(t))[\rho_t(v,x)]\right\}\\
&\ =\ \ \E\, \left\{{\cD}_{g} f(u^x(t))\right\} +\E\,\left\{
 Df(u^x(t))[\rho_t(v,x)]\right\}
\\
&\stackrel{\eqref{083003}}{=} \E\,
\left\{f(u^x (t))\int_0^t ( g(s),\d
W(s))\right\}+ \E\,\left\{
 Df(u^x(t))[\rho_t(v,x)]\right\}.
\end{aligned}
$$

We have
$$
\left| \E\, \left\{f(u^x(t))\int_0^t ( g(s), \d
W(s))\right\}\right|\le \|f\|_{L^\infty}\left(\E\,
\int_0^t |g(s)|^2\d s\right)^{1/2}
$$
and
$$
\left|\E\,\left\{
 Df(u^x(t))[\rho_t(v,x)]\right\} \right|\le \|f\|_{C_b^1({H})}\E\,
|\rho_t(v,x)|\le \|f\|_{C_b^1({H})}(\E\,
|\rho_t(v,x)|^{2})^{1/2}.
$$

{\bf Step II:} Let $\xi(t)=(\xi_1(t), \xi_2(t),\ldots): [0, \infty)\to H$ be a solution to the following system:
$$
\begin{aligned}
&\frac{\d \xi_i(t)}{\d t}=-\frac{\xi_i(t)}{2\sqrt{\sum_{i=1}^{N_*} \xi_i^2(t)}}\qquad\text{for $ i=1, \ldots, N_*$}\\
&\frac{\d \xi_i(t)}{\d t}=-\nu k_i^2\xi_i(t) +[B(u^x(t), \xi(t))+B(\xi(t), u^x(t))]_i\qquad\text{ for $i\ge N_*+1$}.
\end{aligned}
$$
with $\xi(0)=v$. We assume also that ${\xi_i(t)}/{2\sqrt{\sum_{i=1}^{N_*} \xi_i^2(t)}}=0$ if  $\sqrt{\sum_{i=1}^{N_*} \xi_i^2(t)}=0$ (see \cite{HM06}).
Observe that $\xi_1(t), \xi_2(t),\ldots, \xi_{N_*}(t)=0$ for $t\ge 2$.

Now we choose $g: [0, +\infty)\times \Omega\to H$ to be given by the formulae:
$$
g_i(t)=\frac{1}{q_{i, i}}\left(-\nu k_i^2\xi_i(t) + [B(u^x(t), \xi(t))+B(\xi(t), u^x(t))]_i-\frac{\xi_i(t)}{2\sqrt{\sum_{i=1}^{N_*} \xi_i^2(t)}}\right)
$$
for $i=1, \ldots, N_*$ and $g_i(t)=0$ for $i\ge N_*+1$.

It is easy to see that $\rho_t=\xi(t)$ for any $t\ge 0$. Indeed, observe that
$$
\frac{\d \xi(t)}{\d t} +Qg(t)=-\nu A\xi(t)+B(u^x(t), \xi(t))+B(\xi(t), u^x(t))
$$
and
$$
\xi(0)=v.
$$
On the other hand, subtracting equation (\ref{ET27}) from (\ref{ET28}) we obtain the equation for $\rho_t$.
Since $\rho_t$ and $\xi(t)$ solve the same equation with the same initial condition $\rho_0=\xi(0)=v$, we obtain $\rho_t=\xi(t)$ for $t\ge 0$.

{\bf Step III:} To show (\ref{E_10.11}) it is enough to prove that
$$
\sup_{|x|\le R}\sup_{|v|\le 1}\mathbb E\int_0^{\infty} |g(s)|^2\d s<\infty
$$
and
$$
\sup_{t\ge 0}\sup_{|x|\le R}\sup_{|v|\le 1}\mathbb E |\xi(t)|^2<\infty.
$$

We know that $\sum_{i=1}^{N_*} |\xi_i(t)|^2\le |v|^2\le 1$ for $t\ge 0$. In particular $\xi_i(t)=0$ for $t\ge 2$ and $i=1, \ldots, N_*$.  Let $\zeta(t)=(\xi_{N_*+1}(t), \xi_{N_*+2}(t),\ldots)$. It is easy to see that $\zeta$ satisfies the inequality
\begin{equation}\label{e11.3.1}
\frac{\d |\zeta(t)|^2}{\d t}\le -\nu k_{N_*}^2|\zeta(t)|^2 +2C\|u^x(t)\||\zeta(t)|^2+2\tilde C\|u^x(t)\||\zeta(t)|\quad\text{for $t\ge 0$},
\end{equation}
where $\tilde C$ is some positive constant dependent only on $C$.
Choose $\varepsilon>0$ and $\gamma\in (0, 1)$ such that
$$
-\nu k_{N_*}^2 +\varepsilon+2C^2 \max q_{i, i}^2/\nu^2+\nu \Tr Q^2   /(2\gamma\max q_{i, i}^2)<0.
$$
From equation \REQ{e11.3.1} we derive
$$
\frac{\d |\zeta(t)|^2}{\d t}\le(-\nu k_{N_*}^2 +2C\|u^x(t)\|+\varepsilon)|\zeta(t)|^2+C(\varepsilon) \|u^x(t)\|^2
$$
and using Gronwall's lemma we obtain
$$
\begin{aligned}
|\zeta(t)|^2&\le \left(|v|^2+ C(\varepsilon)\int_0^t \|u^x(s)\|^2\d s\right)e^{(-\nu k_{N_*}^2 +\varepsilon)t+2C\int_0^t\|u^x(s)\|\d s}\\
&\le e^{(-\nu k_{N_*}^2 +\varepsilon)t}\left[1+C(\varepsilon)\int_0^t \|u^x(s)\|^2\d s\right] e^{2C\int_0^t\|u^x(s)\|\d s}.
\end{aligned}
$$
Hence we obtain that there exist constant $A>0$ (independent of $t\ge 0$, $v\in B(0, 1)$ and $x\in B(0, R)$) such that
$$
|\zeta(t)|^2\le A \exp(\gamma(-\nu k_{N_*}^2 +\varepsilon +2C^2\max q_{i, i}^2/\nu^2)t) \exp\left({\nu/(2\max q_{i, i}^2)\int_0^t\|u^x(s)\|^2\d s}\right)
$$
for all $t\ge 0$, by the fact that $-\nu k_{N_*}^2 +\varepsilon +2C^2\max q_{i, i}^2/\nu^2<0$.
Thus
$$
\begin{aligned}
&\sup_{|x|\le R, |v|\le 1, t\ge 0} \mathbb E |\zeta(t)|^2\\
&\le A \exp(\gamma(-\nu k_{N_*}^2 +\varepsilon +2C^2\max q_{i, i}^2/\nu^2)t)\mathbb E\left(\exp\left({\nu/(2\max q_{i, i}^2)\int_0^t\|u^x(s)\|^2\d s}\right)\right).
\end{aligned}
$$
Using Lemma \ref{lemma_1} we obtain
$$
\sup_{t\ge 0, |x|\le R, |v|\le 1} \mathbb E |\zeta(t)|^2\le \tilde A \exp(\gamma(-\nu k_{N_*}^2 +\varepsilon +2C^2\max q_{i, i}^2/\nu^2 + \nu/(2\gamma\max q^2_{i, i})\Tr Q^2)t)
$$
for some $\tilde A>0$. On the other hand, by the definition of $N_*$, $k_n$ and the choice of $\varepsilon, \gamma$ we have
$$
-\nu k_{N_*}^2 +\varepsilon +2C^2\max q_{i, i}^2/\nu^2 + \nu/(2\gamma\max q^2_{i, i})\Tr Q^2<0.
$$
Now we must evaluate
$$
\mathbb E\int_0^t|g(s)|^2\d s\le 2\sup_{0\le s\le 2} \mathbb E|g(s)|^2+\mathbb E\int_2^t|g(s)|^2\d s.
$$
The first term on the right side of the above inequality is bounded uniformly in $|x|\le R$ and $|v|\le 1$. Further, for $s\ge 2$ we have
$$
|g(s)|\le \tilde C\|u^x(s)\||\zeta(s) |
$$
and
$$
\begin{aligned}
&\mathbb E\int_2^t|g(s)|^2\d s\le\tilde C^2\mathbb E\int_2^{\infty} \|u^x(s)\|^2|\zeta(s)|^2\d s\\
&\le \hat C\,\mathbb E[\int_2^{\infty}\|u^x(s)\|^2 \exp(\gamma(-\nu k_{N_*}^2 +\varepsilon +2C^2\max q_{i, i}^2/\nu^2)s)\\
&\times \exp (\nu/(2\max q^2_{i, i})\int_0^s\|u^x(r)\|^2\d r)\d s]\\
&\le \hat C\mathbb E[\int_2^{\infty}\exp(\gamma(-\nu k_{N_*}^2 +\varepsilon +2C^2\max q_{i, i}^2/\nu^2)s)\\
&\times\exp(\nu/(2\max q^2_{i, i})|u^x(s)|^2+\nu/(2\max q^2_{i, i})\int_0^s\|u^x(r)\|^2\d r)\d s]\\
&\le\hat C\int_2^{\infty}[\exp(\gamma(-\nu k_{N_*}^2 +\varepsilon +2C^2\max q_{i, i}^2/\nu^2)s)\\
&\times\mathbb E\exp(\nu/(2\max q^2_{i, i})|u^x(s)|^2+\nu/(2\max q^2_{i, i})\int_0^s\|u^x(r)\|^2\d r)]\d s\\
&\le { C'}\int_2^{\infty}\exp(\gamma(-\nu k_{N_*}^2 +\varepsilon +2C^2\max q_{i, i}^2/\nu^2+ \nu\Tr Q^2 /(2\gamma\max q^2_{i, i}))s)\d s,
\end{aligned}
$$
for any $x\in B(0, R)$, where the constant $C'$ depends only on $R$. Using again the assumption on $N_*$ we obtain
$$
\sup_{|x|\le R, |v|\le 1}\mathbb E\int_2^{\infty}|g(s)|^2\d s<\infty.
$$
This completes the proof.
\end{proof}

\begin{lemma}\label{AvB} (Average boundedness) Let $(P_{t})_{t\ge 0}$ correspond to problem $(\ref{shell})$. Then $(P_{t})_{t\ge 0}$
is averagely bounded.
\end{lemma}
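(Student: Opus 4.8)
The plan is to read average boundedness directly off the energy balance (\ref{mainev}), using the coercivity of $A$ together with a Chebyshev estimate. For a given $\ep$ the bounded Borel set $B$ will in fact be a ball $B(0,R)$ centred at the origin of $H$, with $R$ depending only on $\ep$ (through $\nu$, $k_0$ and $\Tr(QQ^*)$) and not on the bounded set $A$ at all, which is somewhat stronger than the definition of average boundedness demands.

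Write $\mathbb E_\mu$ for the expectation when the $\mathcal F_0$-measurable initial condition $\xi$ has law $\mu$ and is independent of $W$; by Theorem \ref{thm_solution_shell} and uniqueness, $P_t^*\mu$ is then the law of $u(t)$. First I would note that, since $(Au,u)\ge k_0^2|u|^2$ gives $\|u(s)\|^2\ge k_0^2|u(s)|^2$, identity (\ref{mainev}) yields
$$
2\nu k_0^2\int_0^T\mathbb E_\mu|u(s)|^2\,\d s \le \mathbb E_\mu|u(T)|^2 + 2\nu\int_0^T\mathbb E_\mu\|u(s)\|^2\,\d s = \mathbb E_\mu|\xi|^2 + \Tr(QQ^*)\,T .
$$
Hence
$$
\limsup_{T\to\infty}\frac1T\int_0^T\mathbb E_\mu|u(s)|^2\,\d s \le \frac{\Tr(QQ^*)}{2\nu k_0^2} =: M ,
$$
and this bound is uniform over every $\mu$ with $\int_H|x|^2\,\mu(\d x)<\infty$; in particular it holds uniformly over $\mu\in\mathcal M_1^A$ whenever $A$ is bounded, since then $\supp\mu$ is bounded, so $\mathbb E_\mu|\xi|^2<\infty$ and the first term washes out in the limit.

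Next, for $R>0$ Chebyshev's inequality gives $P_s^*\delta_x(H\setminus B(0,R))=\mathbb P(|u^x(s)|>R)\le R^{-2}\mathbb E|u^x(s)|^2$; integrating in $x$ against $\mu$ (Tonelli's theorem, all integrands nonnegative) and then in $s$,
$$
\frac1T\int_0^T P_s^*\mu(H\setminus B(0,R))\,\d s \le \frac1{R^2}\cdot\frac1T\int_0^T\mathbb E_\mu|u(s)|^2\,\d s ,
$$
so $\limsup_{T\to\infty}\frac1T\int_0^T P_s^*\mu(B(0,R))\,\d s\ge 1-M/R^2$ for every $\mu\in\mathcal M_1^A$. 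Given $\ep>0$ it then suffices to choose $R>\sqrt{M/\ep}$ and set $B:=B(0,R)$, which gives the required strict inequality $>1-\ep$ for all $\mu\in\mathcal M_1^A$.

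I do not expect a genuine obstacle: the argument is essentially a one-line consequence of the a priori bound (\ref{mainev}) and the dissipativity of $A$. The only care needed is bookkeeping — confirming that (\ref{mainev}) is legitimate for a random, square-integrable $\mathcal F_0$-initial condition and that $P_t^*\mu$ is the law of the corresponding solution, and invoking Tonelli's theorem to move the spatial integral past the expectation and the time average. As noted, the ball $B(0,R)$ can even be taken independently of $A$, so the $\ep$-dependence is completely explicit.
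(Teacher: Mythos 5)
Your proof is correct and follows essentially the same route as the paper: both arguments read average boundedness off the energy identity (\ref{mainev}) via a Chebyshev estimate, the only cosmetic differences being that you pass to the $H$-norm through the coercivity $\|u\|^2\ge k_0^2|u|^2$ (so the ball can even be chosen independently of the bounded set $A$), while the paper applies Chebyshev to $\|u^x(s)\|$ directly, works first with $\delta_x$ uniformly on a ball and then extends to general $\mu\in\mathcal M_1^A$ by Fatou's lemma rather than by your random-initial-condition/Tonelli bookkeeping. Both versions are complete and rest on the same a priori bound, so no further changes are needed.
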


\begin{proof} Fix an $\ep>0$ and let $r>0$ be given. If $x\in B(0, r)$, then
$$
\begin{aligned}
&\frac{1}{T}\int_{0}^{T} P_{s}^{*}\delta_{x}(H\setminus B(0, R))\d s
=\frac{1}{T} \int_{0}^{T}\mathbb P(|u^{x}(s)|>R)\d s
\le \frac{1}{T} \int_{0}^{T}\mathbb P(\|u^{x}(s)\|>R)\d s\\
&=\frac{1}{T}\int_{0}^{T}\mathbb P(\|u^{x}(s)\|^{2}>R^{2})\d s\le \frac{1}{T}\int_{0}^{T}\frac{\mathbb E\|u^{x}(s)\|^2}{R^{2}}\d s\\
&=\frac{1}{\nu R^{2}}\frac{1}{T}\int_{0}^{T}\nu \mathbb E\|u^{x}(s)\|^{2}\d s\le \frac{1}{\nu R^{2}}
(\Tr Q^2+|x|^{2}/T)\le \frac{1}{\nu R^{2}} (\Tr Q^2+r^{2}/T)
\end{aligned}
$$
for arbitrary $R>0$, by (\ref{mainev}). Hence there is $R_{0}>0$ such that
$$
\liminf_{T\to+\infty}\frac{1}{T}\int_{0}^{T} P_{s}^{*}\delta_{x}(B(0, R_{0}))\d s>1-\ep.
$$
On the other hand, by Fatou's lemma we have
$$
\begin{aligned}
\liminf_{T\to+\infty}\frac{1}{T}\int_{0}^{T} P_{s}^{*}\mu(B(0, R_{0}))\d s&\ge
\int_{H}\left(
\liminf_{T\to+\infty}\frac{1}{T}\int_{0}^{T} P_{s}^{*}\delta_{x}(B(0, R_{0}))\d s\right)\mu(\d x)\\
&\ge \int_{H}(1-\ep)\mu(\d x)=1-\ep
\end{aligned}
$$
for any $\mu\in\mathcal M_1^{B(0, r)}$. The proof is complete.
\end{proof}

\begin{lemma}\label{Con} (Concentrating at $0$) Let $(P_{t})_{t\ge 0}$ correspond to problem $(\ref{shell})$. Then $(P_{t})_{t\ge 0}$ is concentrating at $0$.
\end{lemma}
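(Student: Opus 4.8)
The plan is to combine the dissipativity of the drift with the fact that the noise acts on finitely many modes only. Write $u^{x}(t)=v^{x}(t)+z(t)$, where $z$ solves the linear equation \eqref{OU_process} and $v^{x}$ solves the pathwise problem \eqref{deterministic_shell} with $v^{x}(0)=x$. Since $Q$ is diagonal and $q_{n,n}=0$ for $n\ge n_{0}$, the process $z$ takes values in the finite--dimensional subspace $\{u\in H:u_{n}=0\ \text{for}\ n\ge n_{0}\}$, on which $|\cdot|$ and $\|\cdot\|$ are equivalent. Fixing $\ep>0$ and a bounded set $A$, choose $R>0$ so large that $\mu(B(0,R))=1$ for every $\mu\in\mathcal M_{1}^{A}$. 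It is then enough to produce a time $T>0$ and a number $p>0$, depending only on $\ep$ and $R$, such that $P_{T}^{*}\delta_{x}(B(0,\ep))\ge p$ whenever $|x|\le R$: integrating in $x$ gives $P_{T}^{*}\mu(B(0,\ep))\ge p$ for every $\mu\in\mathcal M_{1}^{A}$, so that $\alpha:=p$ together with the common time $T$ witnesses the concentration at $0$ for any pair $\mu_{1},\mu_{2}\in\mathcal M_{1}^{A}$.

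For parameters $\delta>0$ small and $T>0$ large, to be fixed below, consider the event $\Gamma:=\{\sup_{0\le s\le T}|z(s)|\le\delta\}$. On $\Gamma$ one controls $v^{x}$ by reproducing the a priori estimate from the proof of Theorem \ref{thm_solution_shell}: from the cancellation $(B(v^{x}+z,v^{x}+z),v^{x})=(B(v^{x}+z,z),v^{x})$, the bound $|(B(v^{x}+z,z),v^{x})|\le C\,|v^{x}+z|\,|v^{x}|\,\|z\|$ of \eqref{trili}, and the elementary inequalities $|v^{x}+z|\le|v^{x}|+|z|$, $\|v^{x}\|^{2}\ge k_{0}^{2}|v^{x}|^{2}$ and $\|z(t)\|\le k_{n_{0}-1}|z(t)|$ (the last one being where the finite--dimensionality of the noise is used), one gets, for $\delta$ small enough,
$$
\frac{\d}{\d t}|v^{x}(t)|^{2}\le-\frac{\nu k_{0}^{2}}{2}\,|v^{x}(t)|^{2}+C'\delta^{2}\qquad\text{on }\Gamma,\ t\in[0,T],
$$
with a constant $C'$ independent of $x$, $\delta$ and $T$. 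Since $v^{x}(0)=x$, this gives $|v^{x}(t)|^{2}\le\e^{-\nu k_{0}^{2}t/2}|x|^{2}+2C'\delta^{2}/(\nu k_{0}^{2})$ on $\Gamma$, hence, on $\Gamma$,
$$
|u^{x}(T)|^{2}\le 2|v^{x}(T)|^{2}+2|z(T)|^{2}\le 2\,\e^{-\nu k_{0}^{2}T/2}R^{2}+\frac{4C'\delta^{2}}{\nu k_{0}^{2}}+2\delta^{2}.
$$
I would first choose $\delta$ so small that $4C'\delta^{2}/(\nu k_{0}^{2})+2\delta^{2}<\ep^{2}/2$, and then $T$ so large that $2\,\e^{-\nu k_{0}^{2}T/2}R^{2}<\ep^{2}/2$; with these choices $u^{x}(T)\in B(0,\ep)$ on $\Gamma$, simultaneously for every $x$ with $|x|\le R$.

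It then remains to observe that $p:=\mathbb P(\Gamma)>0$ and that $p$ does not depend on $x$, since $z$ is independent of the initial condition. Indeed $z=(z_{1},z_{2},\dots)$ has $z_{n}\equiv0$ for $n\ge n_{0}$, while for $n<n_{0}$ each $z_{n}$ is a one--dimensional Ornstein-Uhlenbeck process started at $0$; the event $\Gamma$ contains $\bigcap_{n<n_{0}}\{\sup_{0\le s\le T}|w_{n}(s)|\le\eta\}$ for $\eta$ small enough, which has positive probability by the classical small--ball estimate for Brownian motion and the independence of the $w_{n}$. Consequently $P_{T}^{*}\delta_{x}(B(0,\ep))=\mathbb P\big(u^{x}(T)\in B(0,\ep)\big)\ge\mathbb P(\Gamma)=p$ for every $|x|\le R$, which by the first paragraph completes the proof.

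The step I expect to be the main obstacle is the pathwise energy inequality for $v^{x}$: because $v^{x}(t)$ need not belong to $V$, the cancellation $(B(v^{x}+z,v^{x}+z),v^{x})=(B(v^{x}+z,z),v^{x})$ and the trilinear bound cannot be applied verbatim and must be established on the Galerkin approximations and then passed to the limit, exactly as in the proof of Theorem \ref{thm_solution_shell}. Everything else is a routine small--ball and integration argument.
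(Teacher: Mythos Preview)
Your proof is correct and rests on the same idea as the paper's: the noiseless dynamics contract to $0$ and the (finite--dimensional) noise stays small with positive probability, so the stochastic solution enters $B(0,\ep)$ with probability bounded below uniformly over $|x|\le R$. The packaging differs slightly. The paper compares the stochastic solution $u^x$ directly with the \emph{zero--noise} solution $v^x$ of $\d v^x=[-\nu Av^x+B(v^x,v^x)]\,\d t$, notes that $|v^x(t_0)|<\ep/2$ for $t_0$ large, and then quotes a stochastic--stability result from \cite{Hakima} to get $|u^x(t_0)-v^x(t_0)|\le\ep/2$ on the event $\{\sup_{t\le t_0}|QW(t)|\le\eta\}$. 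You instead use the Ornstein--Uhlenbeck decomposition $u^x=v^x+z$ from Theorem \ref{thm_solution_shell} and derive the energy inequality for the \emph{pathwise} $v^x$ directly on the small--noise event, which makes the argument self--contained and exploits the finite--dimensionality of the noise through the equivalence $\|z\|\le k_{n_0-1}|z|$. Both routes arrive at the same uniform lower bound $P_{t_0}^*\delta_x(B(0,\ep))\ge\alpha$ for all $|x|\le R$; yours trades a citation for an explicit estimate, and your caveat about justifying the energy identity via Galerkin approximations is exactly the right technical point (and at the same level of rigor the paper itself adopts in Theorem \ref{thm_solution_shell}).
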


\begin{proof} Consider first the deterministic equation
$$
\d v^{x}(t)=[-\nu A v^{x}(t)+B(v^{x}(t), v^{x}(t))]\d t
$$
with the initial condition $v^{x}(0)=x$. Then
$$
\frac{1}{2} \frac{\d |v^{x}(t)|^{2}}{\d t}\le -\nu k_{0}|v^{x}(t)|^{2}
$$
and consequently
$$
|v^{x}(t)|^{2}\to 0\qquad\text{as $t\to+\infty$}
$$
uniformly on bounded sets. Further, fix $\ep>0$ and $r>0$. Let $t_{0}>0$ be such that $v^{x}(t_{0})\in B(0, \ep/2)$ for all $x\in B(0, r)$.
We may show (see Theorem 8 in \cite{Hakima}) that the process corresponding to the considered model is stochastically stable (see also \cite{KPS10}), i.e. there exists $\eta>0$ and the set $F_{\eta}=\{\omega\in \Omega: \sup_{ 0\le t\le t_{0}}|QW(t)(\omega)|\le\eta\}$ such that
$$
|u^{x}(t_{0})(\omega)-v^{x}(t_{0})|\le\ep/2\qquad\text{for any $\omega\in F_{\eta}$.}
$$
Since the process is degenerate, we have $\alpha: =\mathbb P(F_{\eta})>0$.
Consequently, we obtain
$$
P_{t_0}^{*}\delta_{x}(B(0, \ep))\ge\mathbb P(\{\omega\in\Omega: u^{x}(t_0)(\omega)\in B(0, \ep)\})\ge \mathbb P(F_{\eta})=\alpha
$$
for arbitrary $x\in B(0, r).$ Since
$$
P_{t_0}^{*}\mu (B(0, \ep))=\int_{H} P_{t_0}^{*}\delta_{x}(B(0, \ep))\mu(\d x),
$$
we obtain $P_{t_0}^{*}\mu (B(0, \ep))\ge\alpha$ for any $\mu\in\mathcal M_{1}^{B(0, r)}$. But $\ep>0$ and $r>0$ were arbitrary and hence the concentrating property follows.
\end{proof}

We may formulate the main theorem of this part of our paper.

\begin{theorem} The semigroup $(P_{t})_{t\ge 0}$ corresponding to problem $(\ref{shell})$
with $Q$ satisfying condition (\ref{Eq3.04.11})
is asymptotically stable. In particular, it admits a unique invariant measure.
\end{theorem}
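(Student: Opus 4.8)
The strategy is to read off the hypotheses of Theorem~\ref{Mainthe} from the three lemmas proved above and then to invoke that theorem. By Theorem~\ref{thm_solution_shell} and Theorem~\ref{continuous_dependence} the transition semigroup $(P_t)_{t\ge0}$ of~(\ref{shell}) is well defined, Feller and stochastically continuous, so the abstract framework of Section~2 applies; and since the matrix $Q$ considered here is diagonal with only finitely many nonzero entries, equations~(\ref{shell}) and~(\ref{Eq1}) describe the same process, so all of Lemmas~\ref{E-property}, \ref{AvB} and~\ref{Con} are available.

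First I would establish the e--property. Because $Q$ satisfies condition~(\ref{Eq3.04.11}), Lemma~\ref{E-property} gives, for every $f\in C^1_b(H)$ and every $R>0$, a constant $C_0>0$ with
\[
\sup_{t\ge0}\ \sup_{|x|\le R}\ \sup_{|v|\le1}\ \bigl|DP_tf(x)[v]\bigr|\ \le\ C_0\,\|f\|_{C_b^1(H)} .
\]
Thus for each such $f$ the functions $P_tf$, $t\ge0$, are Lipschitz on $B(0,R)$ with a Lipschitz constant independent of $t$, so $(P_tf)_{t\ge0}$ is equicontinuous at every point of $H$. The same bound holds, by the very same argument (using the $t$-uniform estimate on $(\mathbb{E}\,|f(u^x(t))|^2)^{1/2}$ in place of $\|f\|_\infty$, which is finite for any Lipschitz $f$ by the standard energy estimates for~(\ref{shell})), for every function with bounded Fr\'echet derivative. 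By the Remark following Definition~1 --- the reduction of \cite{KPS10} valid on a Hilbert state space --- this equicontinuity on that class already implies the e--property of Definition~1, i.e. for all bounded Lipschitz $\psi$.

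It remains only to recall that average boundedness of $(P_t)_{t\ge0}$ is the content of Lemma~\ref{AvB}, and that $(P_t)_{t\ge0}$ is concentrating at the point $z=0$ by Lemma~\ref{Con}, which uses merely that the noise is degenerate (finitely many nonzero modes). Having the e--property, average boundedness and concentration at $0$, Theorem~\ref{Mainthe} applies and shows that $(P_t)_{t\ge0}$ is asymptotically stable; in particular it admits a unique invariant measure $\mu_*$ and $P_t^*\mu$ converges weakly to $\mu_*$ for every $\mu\in\mathcal M_1$. No real obstacle remains, as all the analysis is contained in Lemmas~\ref{E-property}, \ref{AvB}, \ref{Con} and Theorem~\ref{Mainthe}; the only delicate point is the first step's passage from the $C^1_b(H)$-estimate to the e--property for general bounded Lipschitz functions, which is exactly the reduction of \cite{KPS10}.
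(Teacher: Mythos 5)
Your proposal is correct and follows essentially the same route as the paper: Lemma~\ref{E-property} combined with the Remark after Definition~1 (the reduction of \cite{KPS10} to functions with bounded Fr\'echet derivative) yields the e--property, and then Lemmas~\ref{AvB} and~\ref{Con} let you apply Theorem~\ref{Mainthe}. The paper's own proof is exactly this three-line argument, so no further comparison is needed.
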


\begin{proof} From Lemma \ref{E-property} it follows that the semigroup $(P_{t})_{t\ge 0}$ satisfies the e--property. It is also averagely bounded and concentrating at $0$, by Lemmas \ref{AvB} and \ref{Con}. Application of Theorem  \ref{Mainthe} finishes the proof.
\end{proof}

{\bf{Remark:}} Observe that condition (\ref{Eq3.04.11}) implies that the system with not too much noise is stable even when the noise is added to the first mode only.

\end{document}